\newcommand{\blind}{1}
\DeclareMathOperator*{\argmin}{arg\,min}
\DeclareMathOperator*{\argmax}{arg\,max}
\numberwithin{equation}{section}
\theoremstyle{plain}
\newtheorem{theorem}{Theorem}
\newtheorem{proposition}[theorem]{Proposition}
\begin{document}

\def\spacingset#1{\renewcommand{\baselinestretch}%
{#1}\small\normalsize} \spacingset{1}


\if1\blind
{
  \title{\bf Approximating Likelihood Ratios with Calibrated Discriminative Classifiers}
  \author{Kyle Cranmer$^1$, Juan Pavez$^2$, and Gilles Louppe$^1$\hspace{.2cm} \\
          $^1$New York University\\
          $^2$Federico Santa Mar\'ia University}
  \maketitle
} \fi

\if0\blind
{
  \bigskip
  \bigskip
  \bigskip
  \begin{center}
    {\LARGE\bf Title}
\end{center}
  \medskip
} \fi

\bigskip
\begin{abstract}


In many fields of science, generalized likelihood ratio tests are established tools for
statistical inference.
 At the same time, it has become increasingly common that a simulator (or
generative model) is used to describe complex processes that tie parameters
$\theta$ of an underlying theory and measurement apparatus to high-dimensional
observations $\mathbf{x}\in \mathbb{R}^p$. However, simulator often do not provide a way to evaluate
the likelihood function for a given observation $\mathbf{x}$, which motivates a new class of likelihood-free
inference algorithms.
In this paper, we show that likelihood ratios are invariant under a specific class of dimensionality reduction maps $\mathbb{R}^p \mapsto \mathbb{R}$.  As a direct consequence, we show that discriminative classifiers can be used to
approximate the generalized likelihood ratio statistic when only a generative model for the data is
available. This leads to a new machine learning-based approach to likelihood-free inference that is complementary to Approximate Bayesian Computation, and which does not require a prior on the model parameters. Experimental results
on artificial problems with known exact likelihoods illustrate the potential of the proposed method.

\end{abstract}

\noindent%
{\it Keywords:}  likelihood ratio, likelihood-free inference, classification, particle physics, surrogate model
\vfill

\newpage
\spacingset{1.45} 



\section{Introduction}
\label{sec:introduction}


The likelihood function is the central object that summarizes the information
from an experiment needed for inference of model parameters. It
is key to many areas of science that report the results of classical
hypothesis tests or confidence intervals using the (generalized or profile)
likelihood ratio as a test statistic. At the same time, with the advance of
computing technology, it has become increasingly common that a simulator (or
generative model) is used to describe complex processes that tie parameters
$\theta$ of an underlying theory and measurement apparatus to high-dimensional
observations $\mathbf{x}$. However, directly evaluating the likelihood function
in these cases is often impossible or is computationally impractical.

The main result of this paper is to show that the likelihood ratio is invariant
under dimensionality reductions $\mathbb{R}^p \mapsto \mathbb{R}$, under the assumption that the corresponding
transformation is itself monotonic with the likelihood ratio. As a direct consequence,
we derive and propose an alternative machine learning-based approach for
likelihood-free inference that can also be used in a classical (frequentist)
setting where a prior over the model parameters is not available. More
specifically, we demonstrate that discriminative classifiers can be used
to construct equivalent generalized likelihood ratio test statistics when only a generative model for
the data is available for training and calibration.

%
%
%

As a concrete example, let us consider searches for new particles at the Large
Hadron Collider (LHC). The simulator that is sampling from $p(\mathbf{x}|\theta)$ is
based on quantum field theory, a detailed simulation of the particle detector,
and data processing algorithms that transform raw sensor data into the feature
vector $\mathbf{x}$~\citep{Sjostrand:2006za,Agostinelli:2002hh}. The ATLAS and CMS
experiments have published  hundreds of papers where the final result was
formulated as a hypothesis test or confidence interval using a generalized
likelihood ratio test~\citep{Cowan:2010js}, including most notably the discovery
of the Higgs boson~\citep{Aad:2012tfa,Chatrchyan:2012ufa} and subsequent
measurement of its properties. The bulk of the likelihood ratio tests at the LHC
are based on the distribution of a single event-level feature that discriminates
between a hypothesized process of interest (labeled \textit{signal}) and various
other processes (labeled \textit{background}).
Typically, data generated from the
simulator are used to approximate the density at various parameter points, and
an interpolation algorithm is used to approximate the parameterized
model~\citep{Cranmer:2012sba}. In order to improve the statistical power
of these tests, hundreds of these searches have already been using supervised
learning to train classifiers to discriminate between two two discrete hypotheses based
on a high dimensional feature vector $\mathbf{x}$. The results of this paper outline
how to extend the use of discriminative classifiers for composite hypotheses (parameterized by $\theta$)
in a way
that fits naturally into the established likelihood based inference techniques.

The rest of the paper is organized as follows. In
Sec.~\ref{sec:likelihood-ratio-tests}, we first introduce the likelihood
ratio test statistic in the setting of simple hypothesis testing, and then
outline how it can be computed exactly using calibrated classifiers.
In Sec.~\ref{sec:generalized-likelihood-ratio}, we generalize the proposed
approach to the case of composite hypothesis testing and discuss directions for
approximating the statistic efficiently. We then illustrate the proposed
method in Sec.~\ref{sec:examples} and outline how it could improve
statistical analysis within the field of high energy physics. Related work
and conclusions are finally presented in Sections~\ref{sec:related} and \ref{sec:conclusions}.



\section{Likelihood ratio tests}
\label{sec:likelihood-ratio-tests}

\subsection{Simple hypothesis testing}

Let $\mathbf{X}$ be a random vector with values $\mathbf{x} \in {\cal X}
\subseteq \mathbb{R}^p$ and let $p_\mathbf{X}(\mathbf{x}|\theta)$ denote the
density probability of $\mathbf{X}$ at value $\mathbf{x}$ under the
parameterization $\theta$. Let also assume i.i.d. observed data ${\cal D} = \{
\mathbf{x}_1, \dots, \mathbf{x}_n \}$. In the setting where one is interested in
simple hypothesis testing between a null $\theta=\theta_0$ against an alternate
$\theta=\theta_1$, the Neyman-Pearson lemma states that the likelihood ratio
\begin{equation}\label{eqn:likelihood-ratio-test}
\lambda({\cal D}; \theta_0, \theta_1) = \prod_{\mathbf{x} \in {\cal D}} \frac{ p_\mathbf{X}(\mathbf{x}|\theta_0)}{ p_\mathbf{X}(\mathbf{x}|\theta_1)}
\end{equation}
is the most powerful test statistic.

In order to evaluate $\lambda({\cal D})$, one must be able to evaluate the probability
densities $p_\mathbf{X}(\mathbf{x}| \theta_0)$ and $p_\mathbf{X}(\mathbf{x}| \theta_1)$ at any value $\mathbf{x}$. However,
it is increasingly common in science that one has a complex simulation that can
act as generative model for $p_\mathbf{X}(\mathbf{x}|\theta)$, but one cannot
evaluate the density directly. For instance, this is the case in high energy
physics~\citep{Neal:2007zz} where the simulation of particle detectors can only
be done in the forward mode.



\subsection{Approximating likelihood ratios with classifiers}
\label{sec:approx}

%

The main result of this paper is to
generalize the observation that one can form a test statistic
\begin{equation}\label{eqn:likelihood-ratio-test-equiv}
\lambda'({\cal D}; \theta_0, \theta_1) = \prod_{\mathbf{x} \in {\cal D}} \frac{ p_\mathbf{U}(u=s(\mathbf{x}) | \theta_0)}{ p_\mathbf{U}(u=s(\mathbf{x}) | \theta_1)}
\end{equation}
that is strictly equivalent to \ref{eqn:likelihood-ratio-test}, provided the change
of variable $\mathbf{U} = s(\mathbf{X})$ is based
on a (parameterized) function $s$ that is strictly monotonic with the density ratio
\begin{equation}
r(\mathbf{x};\theta_0, \theta_1) = \frac{p_\mathbf{X}(\mathbf{x}|\theta_0)}{p_\mathbf{X}(\mathbf{x}|\theta_1)}.
\end{equation}
As derived below, this allows to recast the original likelihood ratio test into
an alternate form in which supervised learning can be used to build
$s(\mathbf{x})$ as a discriminative classifier.  In
Sec.~\ref{sec:generalized-likelihood-ratio} we extend this result to
generalized likelihood ratio tests, where it will be useful to have the
classifier decision function $s$ parameterized in terms of $(\theta_0, \theta_1)$.

\begin{theorem}
    \label{thm:ratio-equivalence}
    Let $\mathbf{X}$ be a random vector with values in ${\cal X} \subseteq \mathbb{R}^p$ and parameterized probability
    density $p_{\mathbf{X}}(\mathbf{x} = (x_1, ..., x_p)|\theta)$ and let
    $s : \mathbb{R}^p \mapsto \mathbb{R}$ be a function monotonic with the density ratio
    $r(\mathbf{x};\theta_0,\theta_1)$,
    for given parameters $\theta_0$ and $\theta_1$. In these conditions,
    \begin{equation}
        r(\mathbf{x};\theta_0,\theta_1) = \frac{p_\mathbf{X}(\mathbf{x}|\theta_0)}{p_\mathbf{X}(\mathbf{x}|\theta_1)} = \frac{p_\mathbf{U}(u=s(\mathbf{x})|\theta_0)}{p_\mathbf{U}(u=s(\mathbf{x})|\theta_1)},
    \end{equation}
    where $p_\mathbf{U}(u=s(\mathbf{x};\theta_0,\theta_1)|\theta)$ is the induced probability density of
    $\mathbf{U} = s(\mathbf{X};\theta_0,\theta_1)$.
\end{theorem}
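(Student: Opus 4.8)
The plan is to compute the induced one-dimensional density $p_\mathbf{U}$ directly as the pushforward of $p_\mathbf{X}$ under $s$, and then to exploit the monotonicity hypothesis to show that the Jacobian-like factors relating the two spaces cancel in the ratio. First I would make the monotonicity assumption precise: saying that $s$ is monotonic with $r(\cdot\,;\theta_0,\theta_1)$ means there exists a strictly monotonic function $g:\mathbb{R}\mapsto\mathbb{R}$ such that $r(\mathbf{x};\theta_0,\theta_1)=g(s(\mathbf{x}))$ for (almost) every $\mathbf{x}$. The single consequence that drives the entire argument is that $r$ is then constant on each level set $s^{-1}(u)=\{\mathbf{x}\in{\cal X}:s(\mathbf{x})=u\}$, where it takes the value $g(u)$.

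Next I would write the induced density as a marginalization over the level sets,
\[
p_\mathbf{U}(u|\theta)=\int_{{\cal X}} p_\mathbf{X}(\mathbf{x}|\theta)\,\delta(u-s(\mathbf{x}))\,d\mathbf{x},
\]
which is just the statement that $\mathbf{U}=s(\mathbf{X})$ is distributed as the pushforward of $\mathbf{X}$. Substituting the defining relation $p_\mathbf{X}(\mathbf{x}|\theta_0)=r(\mathbf{x};\theta_0,\theta_1)\,p_\mathbf{X}(\mathbf{x}|\theta_1)$ into the expression for $p_\mathbf{U}(u|\theta_0)$ gives
\[
p_\mathbf{U}(u|\theta_0)=\int_{{\cal X}} r(\mathbf{x};\theta_0,\theta_1)\,p_\mathbf{X}(\mathbf{x}|\theta_1)\,\delta(u-s(\mathbf{x}))\,d\mathbf{x}.
\]
Because the delta function restricts the integrand to the level set $s^{-1}(u)$, on which $r(\mathbf{x};\theta_0,\theta_1)=g(u)$ is constant, the factor $g(u)$ can be pulled outside the integral. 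The integral that remains is exactly $p_\mathbf{U}(u|\theta_1)$, so $p_\mathbf{U}(u|\theta_0)=g(u)\,p_\mathbf{U}(u|\theta_1)$, and dividing yields
\[
\frac{p_\mathbf{U}(u=s(\mathbf{x})|\theta_0)}{p_\mathbf{U}(u=s(\mathbf{x})|\theta_1)}=g(s(\mathbf{x}))=r(\mathbf{x};\theta_0,\theta_1),
\]
which is the claimed identity.

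I expect the main obstacle to be making the level-set manipulation rigorous rather than heuristic. The clean way to avoid the informal delta-function calculus is to argue through the cumulative distribution $F_\theta(u)=\Pr(s(\mathbf{X})\le u\mid\theta)=\int_{\{s(\mathbf{x})\le u\}}p_\mathbf{X}(\mathbf{x}|\theta)\,d\mathbf{x}$ and differentiate in $u$, or equivalently to invoke the co-area formula to disintegrate $p_\mathbf{X}$ over the level sets of $s$; in either route the crucial step is unchanged, namely that $r$ is constant along each level set and can therefore be factored out exactly. A secondary point I would address is the behaviour where $p_\mathbf{U}(u|\theta_1)=0$ (the ratio being undefined there), together with mild regularity on $s$ for the CDF argument (measurability plus a non-degeneracy condition ensuring the induced density exists and the level sets carry no atoms). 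As a sanity check I would verify the canonical case $s=r$, i.e.\ $g=\mathrm{id}$, where the identity collapses to the elegant statement that the likelihood ratio of the likelihood-ratio statistic equals the statistic itself.
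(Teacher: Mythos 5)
Your proof is correct and follows essentially the same route as the paper's: both express $p_\mathbf{U}(u|\theta)$ as a pushforward integral concentrated on the level sets of $s$, use monotonicity to conclude that $r$ is constant (equal to $g(u)$, the paper's $m^{-1}(u)$) on each level set, factor it out, and cancel the remaining level-set integrals in the ratio. The only difference is cosmetic: the paper makes the delta-function calculus explicit via H\"ormander's coarea-type theorem, writing the density as a surface integral with a $1/|\nabla s|$ factor, whereas you pull $g(u)$ out directly and recognize the leftover integral as $p_\mathbf{U}(u|\theta_1)$ by definition --- a streamlining you correctly flag as needing the same rigorization (CDF differentiation or coarea) that the paper supplies.
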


\begin{proof}
Starting from the definition of the probability density function, we have
\begin{align}
p_{\mathbf{U}}(u=s(\mathbf{x})|\theta_0) &= \frac{d}{du}  \int_{\{\mathbf{x}^\prime : s(\mathbf{x}^\prime) \leq u\}} p_\mathbf{X}(\mathbf{x}^\prime|\theta_0) d\mathbf{x}^\prime \nonumber \\
&= \int_{\mathbb{R}^p} \delta(u - s(\mathbf{x}^\prime)) p_\mathbf{X}(\mathbf{x}^\prime|\theta_0) d\mathbf{x}^\prime
\end{align}
Intuitively, this expression can be understood as the integral
over all $\mathbf{x}^\prime \in \mathbb{R}^p$ such that $s(\mathbf{x}^\prime) = u$, as picked
by the Dirac $\delta$ function. Given Theorem 6.1.5 of \citet{Hrmander1990},
it further comes
\begin{equation}
p_{\mathbf{U}}(u=s(\mathbf{x})|\theta_0) = \int_{\mathbf{x}^\prime \in \Omega_u } \frac{1}{|\nabla s(\mathbf{x}^\prime)|} p_\mathbf{X}(\mathbf{x}^\prime|\theta_0) dS_{\mathbf{x}^\prime} \label{eqn:hormander}
\end{equation}
where $\Omega_u = \{x^\prime : s(\mathbf{x}^\prime)=u\}$, $|\nabla s(\mathbf{x}^\prime)| = \sqrt{\sum_{i=1}^p |\frac{\partial}{\partial x_i} s(\mathbf{x}^\prime)|^2}$
and where $dS_{\mathbf{x}^\prime}$ is the Euclidean surface measure on $\Omega_u$.
Also, since $s(\mathbf{x})$ is monotonic with
$r(\mathbf{X}; \theta_0, \theta_1)$,
there exists an invertible function $m:\mathbb{R}^+ \mapsto \mathbb{R}$ such
that $s(\mathbf{x}) = m(r(\mathbf{X}; \theta_0, \theta_1))$.
In particular, we have
\begin{align}
\frac{p_\mathbf{X}(\mathbf{x}|\theta_0)}{p_\mathbf{X}(\mathbf{x}|\theta_1)} &= m^{-1}(s(\mathbf{x})) \nonumber \\
p_\mathbf{X}(\mathbf{x}|\theta_0) &= m^{-1}(s(\mathbf{x})) p_\mathbf{X}(\mathbf{x}|\theta_1) \label{eqn:mapping}
\end{align}
Combining equations \ref{eqn:hormander} and \ref{eqn:mapping}, the density ratio $r(\mathbf{X}; \theta_0, \theta_1)$ can be pulled out of the integral, resulting in
\begin{align}
p_{\mathbf{U}}(u=s(\mathbf{x})|\theta_0) &= \int_{\Omega_u}  \frac{1}{|\nabla s(\mathbf{x}^\prime)|} m^{-1}(s(\mathbf{x}^\prime)) p_\mathbf{X}(\mathbf{x}^\prime|\theta_1) dS_{\mathbf{x}^\prime} \nonumber \\
&= \int_{\Omega_u}  \frac{1}{|\nabla s(\mathbf{x}^\prime)|} m^{-1}(u) p_\mathbf{X}(\mathbf{x}^\prime|\theta_1) dS_{\mathbf{x}^\prime} \nonumber \\
&= m^{-1}(s(\mathbf{x})) \int_{\Omega_u}  \frac{1}{|\nabla s(\mathbf{x}^\prime)|}  p_\mathbf{X}(\mathbf{x}^\prime|\theta_1) dS_{\mathbf{x}^\prime} \nonumber \\
&= \frac{p_\mathbf{X}(\mathbf{x}|\theta_0)}{p_\mathbf{X}(\mathbf{x}|\theta_1)} \int_{\Omega_u}  \frac{1}{|\nabla s(\mathbf{x}^\prime)|}  p_\mathbf{X}(\mathbf{x}^\prime|\theta_1) dS_{\mathbf{x}^\prime}. \label{eqn:factorization}
\end{align}
Similarly, Equation~\ref{eqn:hormander} can be used to derive $p_{\mathbf{U}}(u=s(\mathbf{x})|\theta_1)$, finally yielding
\begin{align}
\frac{p_{\mathbf{U}}(u=s(\mathbf{x})|\theta_0)}{p_{\mathbf{U}}(u=s(\mathbf{x})|\theta_1)} &= \frac{p_\mathbf{X}(\mathbf{x}|\theta_0)}{p_\mathbf{X}(\mathbf{x}|\theta_1)} \frac{\int_{\Omega_u}  \frac{1}{|\nabla s(\mathbf{x}^\prime)|}  p_\mathbf{X}(\mathbf{x}^\prime|\theta_1) dS_{\mathbf{x}^\prime}}{ \int_{\Omega_u}  \frac{1}{|\nabla s(\mathbf{x}^\prime)|}  p_\mathbf{X}(\mathbf{x}^\prime|\theta_1) dS_{\mathbf{x}^\prime} } \nonumber \\
&= \frac{p_\mathbf{X}(\mathbf{x}|\theta_0)}{p_\mathbf{X}(\mathbf{x}|\theta_1)}. \label{eqn:jacob}
\end{align}
\end{proof}



In light of this result, the likelihood ratio estimation problem can now
be recast as a (probabilistic) classification problem, by noticing that the decision
function
\begin{equation}\label{eqn:best-s-clf}
s^*(\mathbf{x}) = \frac{p_{\mathbf{X}}(\mathbf{x}|\theta_1)}{p_{\mathbf{X}}(\mathbf{x} | \theta_0) + p_{\mathbf{X}}(\mathbf{x} | \theta_1)}.
\end{equation}
modeled by a classifier trained to distinguish samples $\mathbf{x} \sim p_{\theta_0}$
from samples $\mathbf{x} \sim p_{\theta_1}$ satisfies the conditions of
Thm.~\ref{thm:ratio-equivalence} (see Appendix~\ref{app:clf-for-s}).
In other words, supervised learning yields a sufficient
procedure for Thm.~\ref{thm:ratio-equivalence} to hold, guaranteeing that any
{\it universally strongly consistent} algorithm can be used for learning $s^*$.
Note however, that it is not a necessary procedure since
Thm.~\ref{thm:ratio-equivalence} holds for any monotonic function $m$ of the
density ratio,  not only for $m(r(\mathbf{x})) = (1 +
r(\mathbf{x}))^{-1}$.
Equivalently,
Thm.~\ref{thm:ratio-equivalence} shows that in the case that we learn a
classifier $s(\mathbf{x})$ which is imperfect up to a monotonic transformation
of $r(\mathbf{x})$, then one can still resort to calibration (i.e., modeling
$p_{\mathbf{U}}(u=s(\mathbf{x}))$) to compute $r(\mathbf{x})$ exactly.
For this reason, the proposed method is expected to be more robust than
directly using ${(1 - s(\mathbf{x}))}/{s(\mathbf{x})}$ as an approximate of
$r(\mathbf{x})$ (which indeed converges towards $r(x)$ when $s(x)$ tends to $s^*(x)$).

\subsection{Learning and calibrating $s$}\label{s:learning_calibration}

In order for the proposed approach to be useful in the likelihood-free setting,
we need to be able to approximate both $s(\mathbf{x})$ and
$p(s(\mathbf{x})|\theta)$ based on a finite number of samples $\{\mathbf{x}_i\}$
drawn from the generative model $p(\mathbf{x}|\theta)$.

As outlined above, any consistent probabilistic classification algorithm can be
used for learning an approximate  map $\hat{s}(\mathbf{x})$ of
Eqn.~\ref{eqn:best-s-clf}. In the common case where the density ratio is
expected to smoothly vary around $\mathbf{x}$, we would however recommend
learning models whose output value $\hat{s}(\mathbf{x})$ also smoothly varies
around $\mathbf{x}$, such as neural networks. For small training sets, tree-based
methods are not expected to work so well for this use case, since they usually
model $\hat{s}(\mathbf{x})$ as a non-strictly monotonic composition of step functions.
In such cases where $\hat s(\mathbf{x})$ is not monotonic with $r(\mathbf{x})$,
the induced probability does not factorize as in Eqn.~\ref{eqn:factorization},
leading to artifacts in the resulting approximation of the density ratio.
Provided enough training data, accurate results can however still be achieved,
given the universal approximator capacity of tree-based models.

Given a reduction map $s$, our results show that a statistic equivalent to the
likelihood ratio can be constructed, provided $p(s(\mathbf{x})|\theta)$ can be
evaluated. Again, we do not have a direct and exact way for evaluating this
density, but an approximation $\hat p(\hat s(\mathbf{x})| \theta)$ can be built
instead, e.g. using density estimation algorithms, such as
histograms or kernel density estimation applied to $\{\hat
s(\mathbf{x}_i)\}$, where the $\{\mathbf{x}_i\}$  are drawn from the generative
model. Most notably, learning such an approximation of $p(s(\mathbf{x})|\theta)$
is a much simpler problem than learning $p(\mathbf{x}|\theta)$, since the
reduction $s$ projects $\mathbf{x}$ into a one-dimensional space in which only
the (simpler) informative content of $r(\mathbf{x})$ is preserved.

An alternative approach for calibration is to approximate the density ratio
$r(\hat{s}(\mathbf{x}))$ directly.
For example, isotonic regression, which is commonly used to transform the classifier score  $\hat{s}(\mathbf{x})$ into $\hat{s}_\textrm{iso}(\mathbf{x})$ that more accurately reflect  the posterior probability $s^*(\mathbf{x})$
of Eqn.~\ref{eqn:best-s-clf}, can be used for calibration. This is done by inverting the relationship ${r}(\mathbf{x}) = (1-s^*(\mathbf{x}))/s^*(\mathbf{x})$ to obtain $\hat{r}(\hat{s}(\mathbf{x})) = (1-\hat{s}_\textrm{iso}(\mathbf{x}))/\hat{s}_\textrm{iso}(\mathbf{x})$.
Additionally, Sec.~\ref{sec:related} describes related work in which the ratio $\hat{r}(\mathbf{x})$ is estimated directly on the feature space $\mathcal{X}$.




One strength of the proposed approach is that it factorizes the approximation of
the dimensionality reduction ($\hat{s}(\mathbf{x}) \approx s(\mathbf{x})$) from the
calibration procedure ($\hat p(\hat s(\mathbf{x})| \theta) \approx
p(\hat{s}(\mathbf{x})|\theta)$ or $\hat{r}(\hat{s}(\mathbf{x})) \approx r(\hat{s}(\mathbf{x}))$). Thus, even if the classifier does a poor job at
learning the optimal decision function~\ref{eqn:best-s-clf} and, therefore, at
reproducing the level sets of the per-sample likelihood ratio, the density of
$\hat{s}$ can still be well calibrated. In that case, one might loose power, but
the resulting inference will still be valid. This point was made by
\cite{Neal:2007zz} and is well appreciated by the particle physics community
that typically takes a conservative attitude towards the use of machine learning
classifiers precisely due to concerns about the calibration of $p$-values in the
face of nuisance parameters associated to the simulator.

\section{Generalized likelihood ratio tests}
\label{sec:generalized-likelihood-ratio}

Thus far we have shown that the target likelihood ratio
$r(\mathbf{x};\theta_0,\theta_1)$
with high dimensional features $\mathbf{x}$ can be reproduced via the univariate
densities $p(s(\mathbf{x})|\theta_0)$ and $p(s(\mathbf{x})|\theta_1)$ if the
reduction $s(\mathbf{x})$ is monotonic with $r(\mathbf{x};\theta_0,\theta_1)$.
We now generalize from the ratio of two simple hypotheses specified by
$\theta_0$ and $\theta_1$ to the case of composite hypothesis testing where
$\theta$ are continuous model parameters.

\subsection{Composite hypothesis testing}
\label{sec:composite-hypothesis-testing}

In the case of composite hypotheses $\theta \in \Theta_0$ against an alternative
$\theta \in \Theta_1$ (such that $\Theta_0 \cap \Theta_1 = \emptyset$ and $\Theta_0 \cup \Theta_1 = \Theta$), the
generalized likelihood ratio test, also known as the profile likelihood ratio
test, is commonly used
\begin{equation}\label{eqn:generalized-lr}
\Lambda(\Theta_0) =  \frac{ \sup_{\theta \in \Theta_0} p({\cal D} | \theta)}{ \sup_{\theta \in \Theta} p({\cal D} | \theta)} \; .
\end{equation}
This generalized likelihood ratio can be used both for hypothesis tests in the
presence of nuisance parameters or to create confidence intervals with or
without nuisance parameters.  Often, the parameter vector is broken into two
components $\theta=(\mu,\nu)$, where the $\mu$ components are considered
parameters of interest while the $\nu$ components are considered nuisance
parameters. In that case $\Theta_0$ corresponds to all values of $\nu$ with
$\mu$ fixed.

Evaluating the generalized likelihood ratio as defined by
Eqn.~\ref{eqn:generalized-lr} requires finding for both the numerator and the
denominator the maximum likelihood estimator
\begin{equation}\label{eq:mle}
    \hat{\theta} = \argmax_\theta p({\cal D} | \theta).
\end{equation}
Again, this is made difficult in the likelihood-free setting and it is not
obvious that we can find the same estimators if we are working instead with
$p(s(\mathbf{x})|\theta)$. Fortunately, there is a construction
based on $s$ that works: the maximum likelihood estimate of Eqn.~\ref{eq:mle} is
the same as the value that maximizes the likelihood ratio with respect to
$p({\cal D}|\theta_1)$, for some fixed value of $\theta_1$ chosen such that the support of $p(\mathbf{x}|\theta_1)$ covers the support of $p(\mathbf{x}|\theta)$.
This allows us to
use Thm.~\ref{thm:ratio-equivalence} to reformulate the maximum likelihood
estimate (MLE) as
\begin{align}\label{eq:mle_withs}
\hat{\theta} &= \argmax_\theta  p({\cal D} | \theta) \nonumber \\
&= \argmax_\theta  \prod_{\mathbf{x} \in {\cal D}} \frac{p(\mathbf{x}| \theta)}{p(\mathbf{x}|\theta_1)} \nonumber \\
&= \argmax_\theta  \prod_{\mathbf{x} \in {\cal D}} \frac{p(s(\mathbf{x}; \theta, \theta_1) | \theta)}{p(s(\mathbf{x}; \theta, \theta_1) |\theta_1)} \;,
\end{align}
where $s(\mathbf{x};\theta,\theta_1)$ denotes a \textit{parameterized}
transformation $s$ of $\mathbf{X}$ in terms of $(\theta,\theta_1)$ that is monotonic
with $r(\mathbf{x};\theta,\theta_1)$. Note that it is important that we include
the denominator $p(s(\mathbf{x}; \theta, \theta_1) |\theta_1)$ because this
cancels Jacobian factors that vary with $\theta$.

Finally, once the maximum likelihood estimates have been found for both the numerator
and denominator of Eqn.~\ref{eqn:generalized-lr}, the generalized likelihood
ratio can be estimated as outlined in Sec.~\ref{sec:approx}
for simple hypothesis testing.


\subsection{Parameterized classification}
\label{sec:param-clf}

In order to provide parameter inference in the likelihood-free setting as
described above, we must train a family $s(\mathbf{x};\theta_0,\theta_1)$ of
classifiers parameterized by $\theta_0$ and $\theta_1$, the parameters
associated to the null and alternate hypotheses, respectively. While this could
be done independently for all $\theta_0$ and $\theta_1$, using the procedure
outlined in Sec.~\ref{sec:likelihood-ratio-tests}, it is desirable and
convenient to have a smooth evolution of the classification score as a function
of the parameters. For this reason, we anticipate a single learning stage based
on training data with input $(\mathbf{x}, \theta_0, \theta_1)_i$ and target
$y_i$, as outlined in Alg.~\ref{alg:training}. Somewhat unusually, the
unknown values of the parameters are taken as input to the classifier; their
values will be specified via the enveloping (generalized) likelihood ratio of
Eqn.~\ref{eqn:generalized-lr}.  In this way, the parameterized classifier
now models the distribution of the output $y$ conditional to $(\mathbf{x}, \theta_0, \theta_1)$,
for any $\mathbf{x}$ and any combination of parameter values $\theta_0, \theta_1$.

While the optimal decision function \ref{eqn:best-s-clf} is expected to be
learned for the parameter values $\theta_0$ and $\theta_1$ selected in
Alg.~\ref{alg:training}, it is not clear whether the optimal decision
function can be expected for data generated from  $\theta'_0$ and $\theta'_1$
never jointly encountered during learning. Similarly, it is not clear how the
limited capacity of the classifier may impact the performance of the resulting
parameterized decision function.
Preliminary exploration by \cite{Baldi:2016fzo} shows that
 a uniform grid scan over parameter space is an effective practical approach;
 however, we introduce the distributions $\pi_{\Theta_0}$ and $\pi_{\Theta_1}$
into the Alg.~\ref{alg:training} to allow for a more
 sophisticated sampling strategy.

\begin{algorithm}[t]
\caption{Learning a parameterized classifier.}\label{alg:training}
\begin{algorithmic}
    \State ${\cal T := \{ \}}$;
    \While{ $\text{size}({\cal T}) < N $ }
        \State Draw $\theta_0 \sim \pi_{\Theta_0}$;
	    \State Draw $\mathbf{x} \sim p(\mathbf{x}|\theta_0)$;
		\State ${\cal T} := {\cal T} \cup \{ ((\mathbf{x}, \theta_0, \theta_1), y=0) \}$;
        \State Draw $\theta_1 \sim \pi_{\Theta_1}$;
		\State Draw $\mathbf{x} \sim p(\mathbf{x}|\theta_1)$;
		\State ${\cal T} := {\cal T} \cup \{ ((\mathbf{x}, \theta_0, \theta_1), y=1) \}$;
    \EndWhile
    \State Learn a single classifier $s(\mathbf{x}; \theta_0, \theta_1)$ from ${\cal T}$.
\end{algorithmic}
\end{algorithm}

\subsection{Parameterized calibration}
\label{sec:param-calibration}

Once the parameterized classifier $\hat{s}(\mathbf{x}; \theta_0, \theta_1)$ is trained,
we can use the generative model together with one of the calibration strategies
discussed in Sec.~\ref{s:learning_calibration} for particular values of $\theta_0$ and $\theta_1$.
For a single parameter point $\theta$, this is a tractable univariate density
estimation problem. The challenge comes from the need to calibrate this density
for all values of $\theta$. A straightforward approach would be to run the
generative model on demand for any particular value of $\theta$. In the context
of a likelihood fit this would mean that the optimization algorithm that is
trying to maximize the likelihood with respect to $\theta$ needs access to the
generative model $p(\mathbf{x}|\theta)$.  This is the strategy used for the examples presented in Sec.~\ref{sec:examples}.

Calibrating the density on-demand can be  impractical when the generative
model is computationally expensive or has high-latency (for instance some human
intervention is required to reconfigure the generative model). In high energy physics, where
it is common to calibrate the distribution of a fixed classifier. There the strategy is to interpolate
the distribution between discrete values of $\theta$ in order to produce a continuous parameterization for
$p(s | \theta)$~\citep{read1999linear,Cranmer:2012sba,baak2015interpolation}.
One can easily imagine a number of approaches to parameterized calibration and the
relative merits of those approaches will depend critically on the dimensionality of $\theta$ and the
computational cost of the generative model. We leave a more general strategy for
this overarching optimization problem as an area of future work.




\subsection{Mixture models}
\label{sec:mixture}

In the special case of (simple or composite) hypothesis testing between
models defined as known mixtures of several components, i.e. when $p(\mathbf{x}|\theta)$ can be written as
\begin{equation}
p(\mathbf{x}|\theta)=\sum_c w_c(\theta) p_c(\mathbf{x}| \theta),
\end{equation}
the target likelihood ratio can be formulated in terms of pairwise
classification problems. Specifically, we can write
\begin{align}
\frac{p(\mathbf{x}|\theta_0)}{p(\mathbf{x}|\theta_1)} &= \frac{\sum_c w_c(\theta_0) p_c(\mathbf{x}| \theta_0)}{\sum_{c'} w_{c'}(\theta_1) p_{c'}(\mathbf{x}| \theta_1)} \nonumber \\
&= \sum_c \left[ \sum_{c'} \frac{ w_{c'}(\theta_1)}{w_c(\theta_0)} \frac{ p_{c'}(\mathbf{x}| \theta_1)}{  p_c(\mathbf{x}| \theta_0)}  \right]^{-1} \nonumber \\
&= \sum_c \left[ \sum_{c'} \frac{ w_{c'}(\theta_1)}{w_c(\theta_0)} \frac{ p_{c'}(s_{c,c'}(\mathbf{x};\theta_0, \theta_1)| \theta_1)}{ p_c(s_{c,c'}(\mathbf{x};\theta_0, \theta_1)| \theta_0)}  \right]^{-1}. \label{eq:decomposedResult}
\end{align}
The second line is a trivial, but a useful decomposition into pairwise
density ratio sub-problems between $p_{c'}(\mathbf{x}|\theta_1)$ and
$p_c(\mathbf{x}|\theta_0)$.  The third line uses
Thm.~\ref{thm:ratio-equivalence} to relate the high-dimensional likelihood
ratio into an equivalent calibrated likelihood ratio based on the univariate
density of the corresponding classifier.

In applications where mixture models are commonly used, this decomposition allows
one to construct better likelihood ratio estimates since it allows the classifiers
$s_{c,c'}$ to focus on simpler sub-problems, for which higher accuracy is
expected.

Finally, as a technical point, in the situation where the only free parameters
of the  model are the mixture coefficients $w_c$, the distributions
$p_{c}(s_{c,c'}(\mathbf{x};\theta_0, \theta_1)| \theta)$ are independent of
$\theta$. For this reason, sub-ratios $r_{c, c'}(\mathbf{x}; \theta_0, \theta_1) = \frac{
p_{c'}(s_{c,c'}(\mathbf{x};\theta_0, \theta_1)|\theta_1)}{
p_c(s_{c,c'}(\mathbf{x};\theta_0, \theta_1)|\theta_0)}$ simplify to $\frac{
p_{c'}(s_{c,c'}(\mathbf{x}))}{ p_c(s_{c,c'}(\mathbf{x}))}$, which can be
pre-computed without the need of parameterized classification or
calibration.

\subsection{Diagnostics}\label{S:diagnostics}

While Thm.~\ref{thm:ratio-equivalence} states that the likelihood ratio $r(\mathbf{x}; \theta_0, \theta_1)$ is
invariant under the dimensionality reduction $s(\mathbf{x}; \theta_0, \theta_1)$ provided that it is monotonic with
$r(\mathbf{x}; \theta_0, \theta_1)$ itself and we know that any universally strongly consistent algorithm can be
used to learn such a function, we know that in practice $\hat{r}(\hat{s}(\mathbf{x}; \theta_0, \theta_1))$ will not be exact.
Thus, it is crucial that to have some diagnostic procedures to assess the quality of this approximation.
This is complicated by the fact that in the likelihood-free setting, we don't have access to the true likelihood ratio.
Below we consider two such diagnostic procedures that can be implemented in the likelihood-free setting. We illustrate these diagnostic procedures in Fig.~\ref{fig:diagnostics}.

The first diagnostic procedure is related to the procedure for finding the MLE $\hat{\theta}$ in Eqn.~\ref{eq:mle_withs}. As pointed out there it is important that one maximizes the likelihood ratio as the surface integral and Jacobian factors related to the dimensionality reduction only cancel in the ratio (see Eqn.~\ref{eqn:factorization}).
Importantly, they also only cancel if the reduction map satisfies the assumptions of Thm.~\ref{thm:ratio-equivalence}. Moreover, the resulting value of $\hat{\theta}$ should be independent of the value of $\theta_1$ used in the denominator of the likelihood ratio. Similarly, we have
\begin{equation}\label{eq:theta_1_independence}
\log\Lambda(\theta) = \log \frac{p(\mathcal{D}|\theta)}{p(\mathcal{D}|\hat\theta)} = \log \frac{p(\mathcal{D}|\theta)}{p(\mathcal{D}|\theta_1)} -\log \frac{p(\mathcal{D}|\hat\theta)}{p(\mathcal{D}|\theta_1)}
\end{equation}
for all values of $\theta_1$. Thus, by explicitly checking the independence of these quantities on $\theta_1$ we indirectly probe the quality of the approximation $\hat{r}(\hat{s}(\mathbf{x}; \theta_0, \theta_1)) \approx {r}({s}(\mathbf{x}; \theta_0, \theta_1))$.

The second diagnostic procedure leverages the connection of this technique to direct density ratio estimation and its application to covariate shift and importance sampling. The idea is simple: we test the relationship $p(\mathbf{x}|\theta_0) = p(\mathbf{x}|\theta_1) {r}({s}(\mathbf{x}; \theta_0, \theta_1))$ with the approximate ratio $\hat{r}(\hat{s}(\mathbf{x}; \theta_0, \theta_1))$ and samples drawn from the generative model. More specifically, we can train a classifier to distinguish between unweighted samples from $p(\mathbf{x}|\theta_0)$ and samples from $p(\mathbf{x}|\theta_1)$ weighted by $\hat{r}(\hat{s}(\mathbf{x}; \theta_0, \theta_1))$.
If the classifier can distinguish between the distributions, then $\hat{r}(\hat{s}(\mathbf{x}; \theta_0, \theta_1))$ is not a good approximation of ${r}({s}(\mathbf{x}; \theta_0, \theta_1))$. In contrast, if the classifier is unable to distinguish between the two distributions, then either $\hat{r}(\hat{s}(\mathbf{x}; \theta_0, \theta_1))$ is a good approximation or the discriminator is not effective. The two situations can be disentangled to some degree by training another classifier to distinguish between an unweighted distribution of samples from $p(\mathbf{x}|\theta_1)$.

\section{Examples and applications}
\label{sec:examples}

In this section, we illustrate the proposed method on two representative
examples where the exact likelihood is known and then discuss its application to high energy physics. The code used to produce the results and extended details for these examples is available in Ref.~\citep{carl}, which utilizes the classification and calibration routines in scikit-learn~\citep{scikit-learn}. 

\subsection{Likelihood ratios of mixtures of normals}

As a simple and illustrative example, let us first consider the approximation of
the log-likelihood ratio $\log \left( r(\mathbf{x};\gamma=0.05,\gamma=0) \right)$ between the 1D mixtures
$p(\mathbf{x}|\gamma=0.05)$ and $p(\mathbf{x}|\gamma=0)$ defined as
\begin{align}
p(\mathbf{x}|\gamma) &= (1-\gamma)\frac{p_{c_0}(\mathbf{x}) +  p_{c_1}(\mathbf{x})}{2}   + \gamma \, p_{c_2}(\mathbf{x}),
\end{align}
where $p_{c_0} := {\cal N}(\mu=-2, \sigma^2=0.25^2)$, $p_{c_1} := {\cal N}(\mu=0, \sigma^2=4)$,
$p_{c_2} := {\cal N}(\mu=1, \sigma^2=0.25)$. Samples drawn for the nominal value $\gamma=0.05$ are shown in
Figure~\ref{fig:1a} and used later for inference.

Figure~\ref{fig:p_sij} shows the intermediate stages for the decomposition described in Sec.~\ref{sec:mixture}.
The blue and green curves show $p_{c'}(\hat{s}_{c,c'}(\mathbf{x}))$,  the distributions for the score for the sub-classifiers for the three pair-wise comparisons of the mixture components. The red curves in Fig.~\ref{fig:p_sij} show the approximation of the density ratio (rescaled as $(1 +
\hat{r}(\mathbf{x}))^{-1}$) obtained from those distributions.


Figures~\ref{fig:1c} and \ref{fig:1d} show the approximate $\log  \hat{r}(\mathbf{x}) $ as a function of $\mathbf{x}$ using a
2-layer neural network and a random forest for the classifier $\hat{s}(\mathbf{x})$.
The neural network provides a smoother approximation, while the random has some artifacts due to the fact that the
decision function is piece-wise constant.
The blue curves show the exact $\log {r}(\mathbf{x})$, the green curves show
$\log \left( (1-\hat s(\mathbf{x}))/\hat s(\mathbf{x}) \right)$
without calibration, while the red curve is the improved approximation $\log \hat{r}(\mathbf{x})$ calibrated
using histograms. Finally, the cyan curve shows the approximated log-likelihood ratio when decomposing the mixture, as
seen in Fig.~\ref{fig:p_sij}.  By leveraging the fact that densities are mixtures, the capacity
of the underlying classifiers can be more effectively focused on easier classification tasks,
resulting as expected in even more accurate approximations.

\begin{figure}
    \centering
    \begin{subfigure}[b]{0.4\textwidth}
        \includegraphics[clip, trim=0.5cm 0.5cm 0.5cm 0.5cm, width=\textwidth]{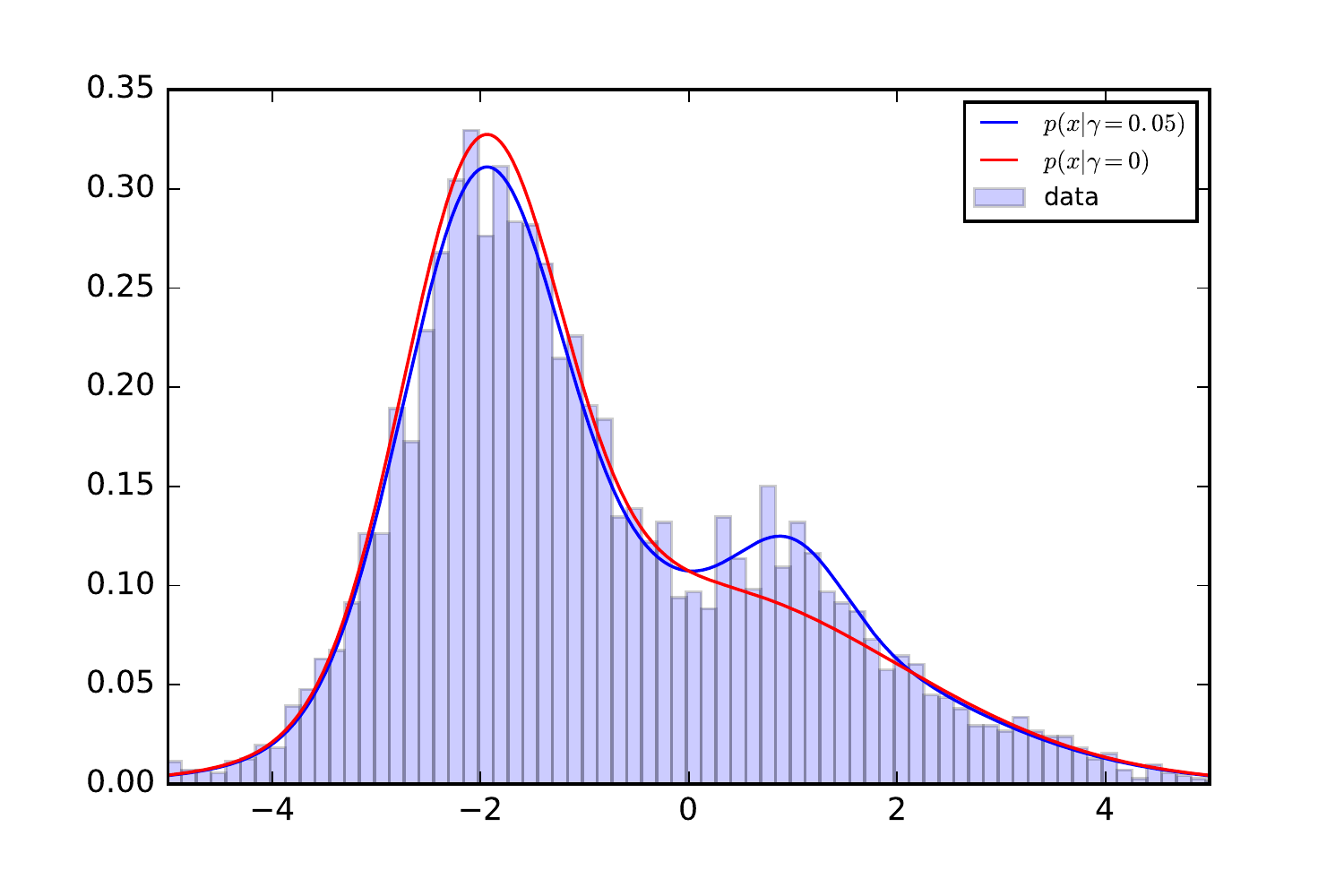}
        \caption{$p(\mathbf{x}|\gamma)$  for $\gamma=0.05$ and $\gamma=0$}
        \label{fig:1a}
    \end{subfigure}
    ~
    \begin{subfigure}[b]{0.4\textwidth}
        \includegraphics[clip, trim=0.5cm 0.5cm 0.5cm 0.5cm,width=\textwidth]{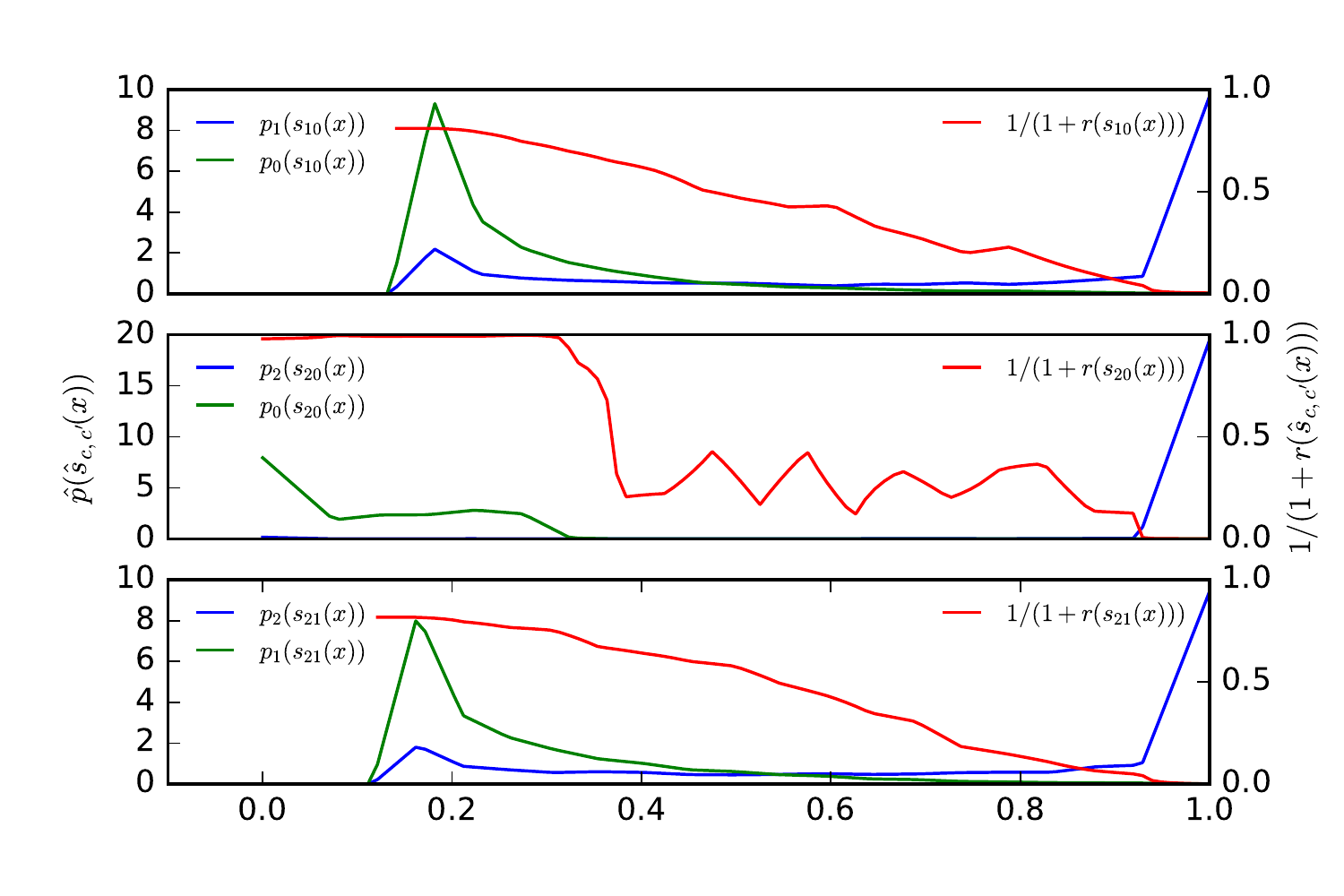}
        \caption{$\hat{p}_c(\hat{s}_{c,c^\prime}(\mathbf{x}) )$ and $(1+\hat{r}(\hat{s}_{c,c^\prime}(\mathbf{x}))^{-1}$}
        \label{fig:p_sij}
    \end{subfigure}

    \vspace{1em}

    \begin{subfigure}[b]{0.4\textwidth}
        \includegraphics[clip, trim=0.5cm 0.5cm 0.5cm 0.5cm,width=\textwidth]{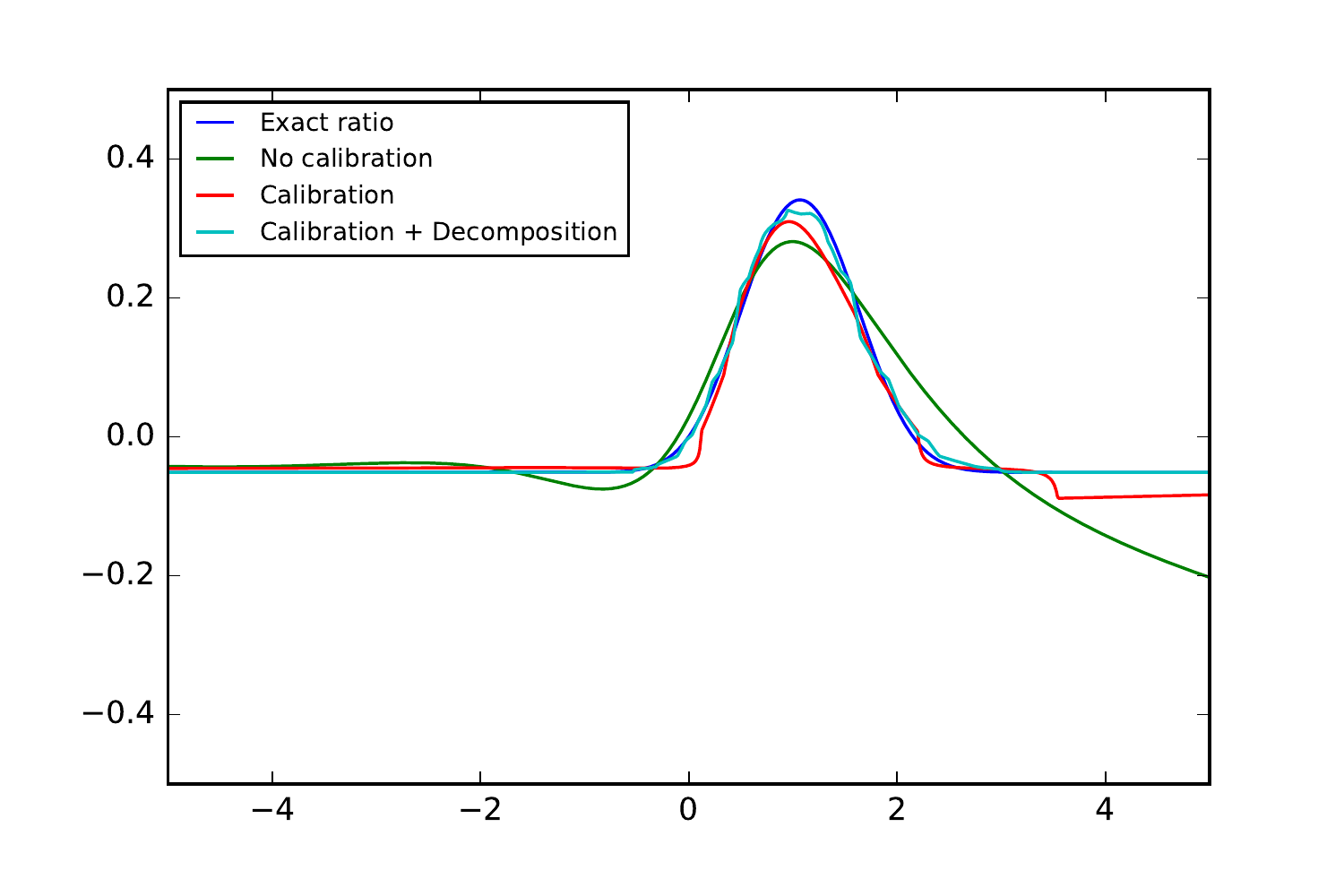}
        \caption{$\log \hat r(\hat{s}(\mathbf{x}))$ using neural network}
        \label{fig:1c}
    \end{subfigure}
    ~
    \begin{subfigure}[b]{0.4\textwidth}
        \includegraphics[clip, trim=0.5cm 0.5cm 0.5cm 0.5cm,width=\textwidth]{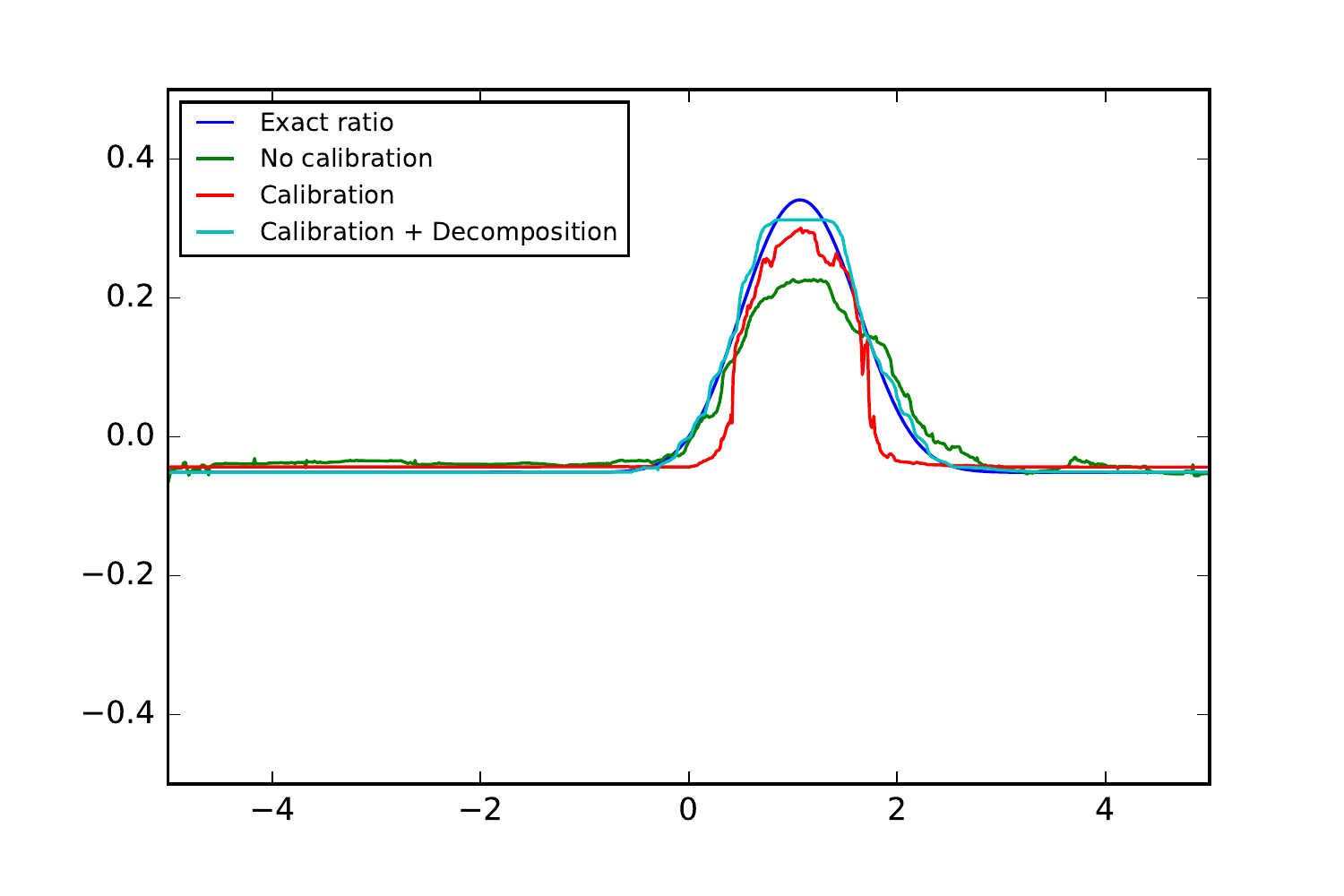}
        \caption{$\log \hat r(\hat{s}(\mathbf{x}))$ using random forest}
        \label{fig:1d}
    \end{subfigure}

    \vspace{1em}

    \begin{subfigure}[b]{0.4\textwidth}
        \includegraphics[clip, trim=0.5cm 0.5cm 0.5cm 0.5cm,width=\textwidth]{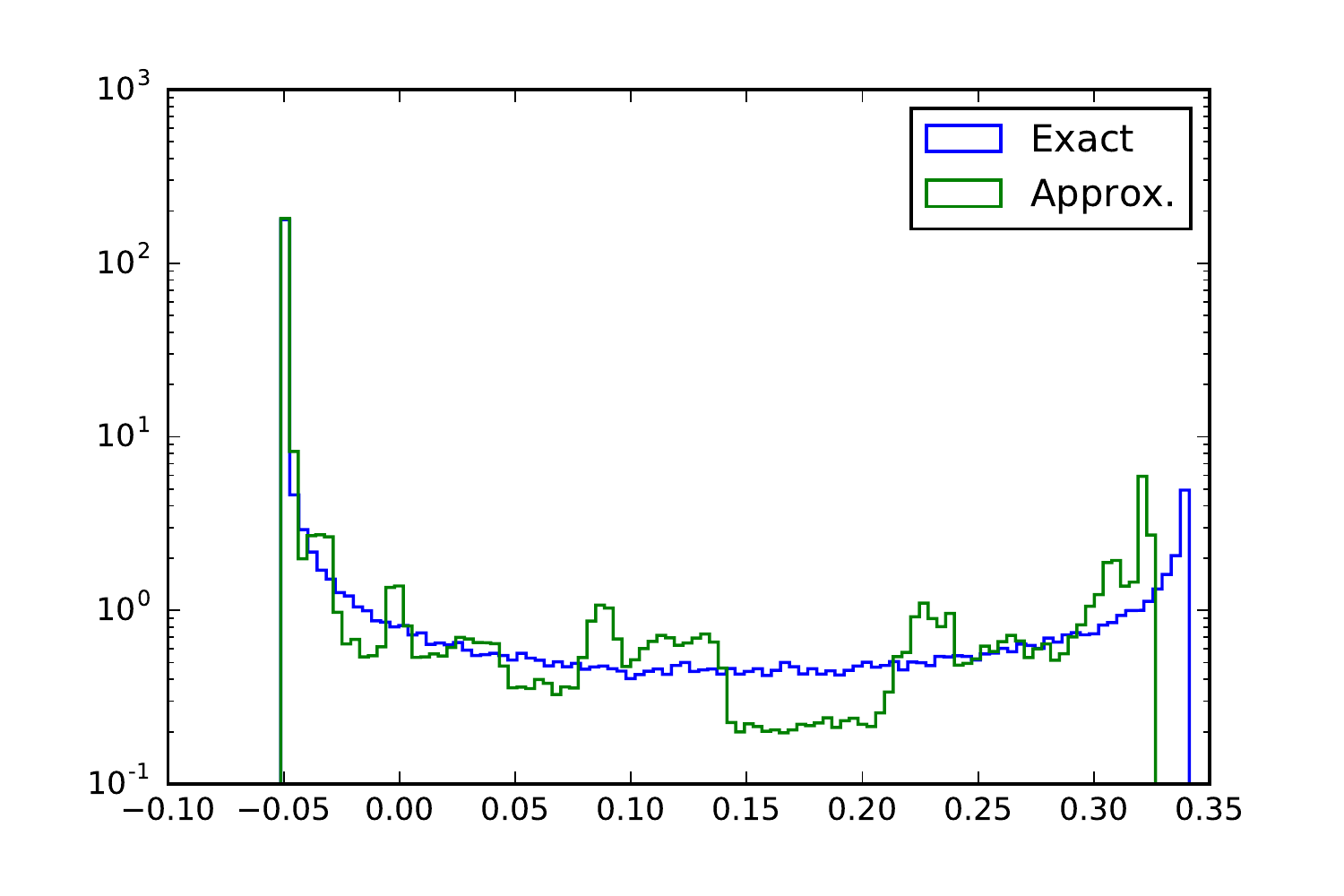}
        \caption{$p( \log \hat r(\hat{s}(\mathbf{x})) \, | \,\gamma = 0.05)$ }
        \label{fig:decomposed_ratio}
    \end{subfigure}
    ~
    \begin{subfigure}[b]{0.4\textwidth}
        \includegraphics[clip, trim=0.5cm 0.5cm 0.5cm 0.5cm,width=\textwidth]{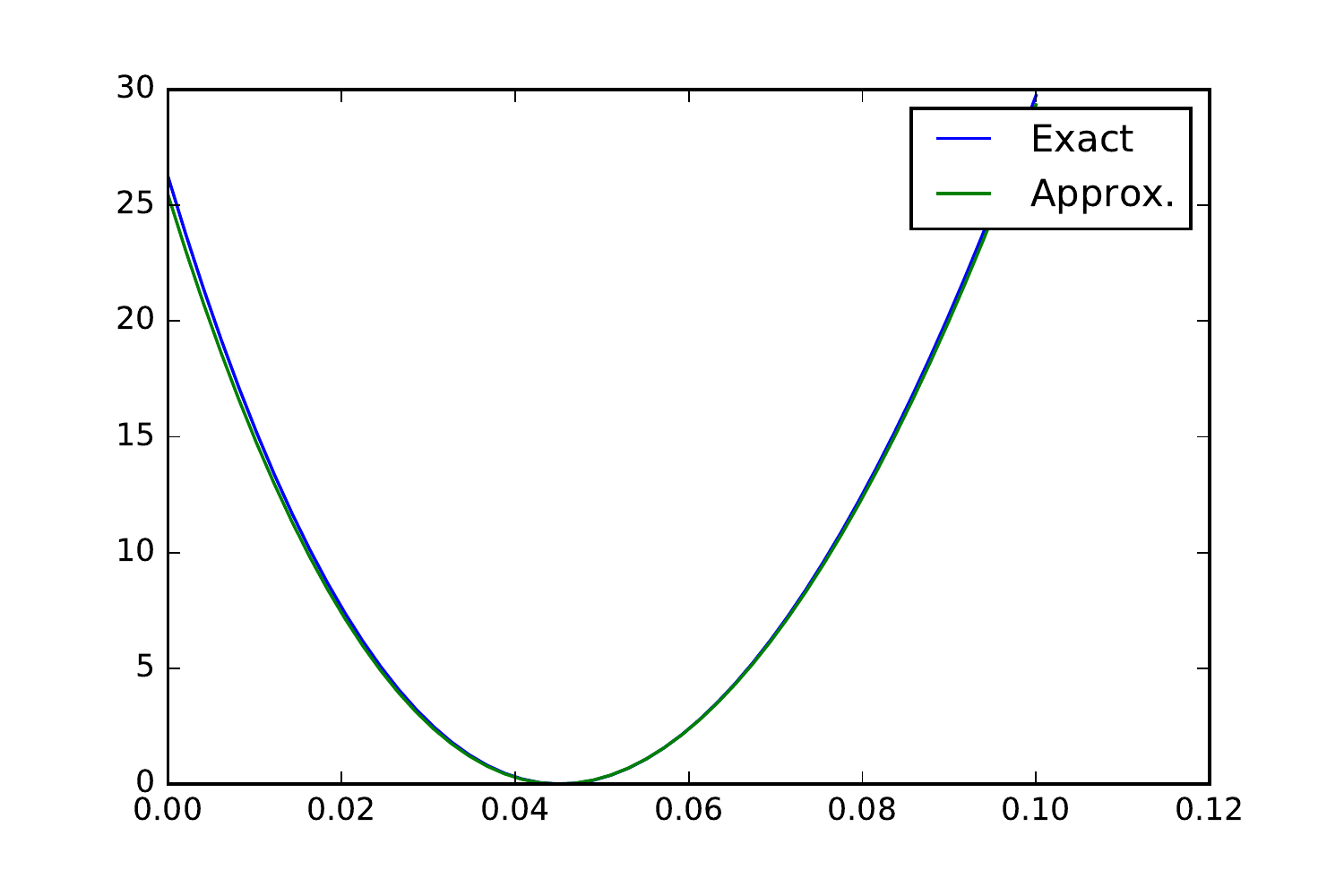}
        \caption{$-2\log \Lambda(\gamma)$ }
        \label{fig:likelihood}
    \end{subfigure}

    \caption{Histogram of $\mathcal{D}$ generated from $\gamma=0.05$ and plots illustrating various stages in 
    the approximation of the log-likelihood ratio  $\log r(\mathbf{x};\gamma=0.05,\gamma=0)$
             with calibrated classifiers (see text). }
\end{figure}

As the results show, calibrating $\hat s(\mathbf{x})$ through univariate density estimation of
$\hat p(\hat s(\mathbf{x}))$ is key to obtaining accurate results. Standard
histograms with uniform binning have been used here for illustrative purposes, but we
anticipate that more sophisticated calibration strategies will be important in further development
of this method. We leave this as an area for future study.

Figure~\ref{fig:decomposed_ratio} shows the distribution of $ \log \hat r(\mathbf{x})$ for $\gamma=0.05$
using the decomposed approximation of $\hat r(\mathbf{x})$ with neural networks and the distribution
of the exact log-likelihood ratio. While there are some artifacts in the distribution in the low-probability regions and the maximum value of the log-likelihood ratio is underestimated, the overall shape of the distribution is well approximated.

Finally, we come to the log-likelihood curve
\begin{equation}
\log \Lambda( \gamma ) =  \log \frac{ p({\cal D} | \gamma ) }{ \sup_{\gamma \in \Theta} p({\cal D} | \gamma )} \;
\end{equation}
for the dataset $\mathcal{D}$ shown in  Fig.~\ref{fig:1a}.
By exploiting Eqn.~\ref{eq:theta_1_independence},
the generalized likelihood ratio can be computed by evaluating
both terms with respect to a common reference $\gamma=0$
as outlined in Sec.~\ref{sec:generalized-likelihood-ratio}.
Figure~\ref{fig:likelihood} shows that the exact likelihood curve is very well
approximated by the method, confirming that
even when the raw classifier does a poor job at modeling the $s^*(\mathbf{x})$, a good approximations of the
likelihood ratio can still be obtained by calibrating $s(\mathbf{x})$ (and by decomposing
the mixture, if possible).

%

\begin{figure}
    \centering

    \begin{subfigure}[t]{0.475\textwidth}
        \includegraphics[clip, trim=0.5cm 0.5cm 0.5cm 0.5cm,width=\textwidth]{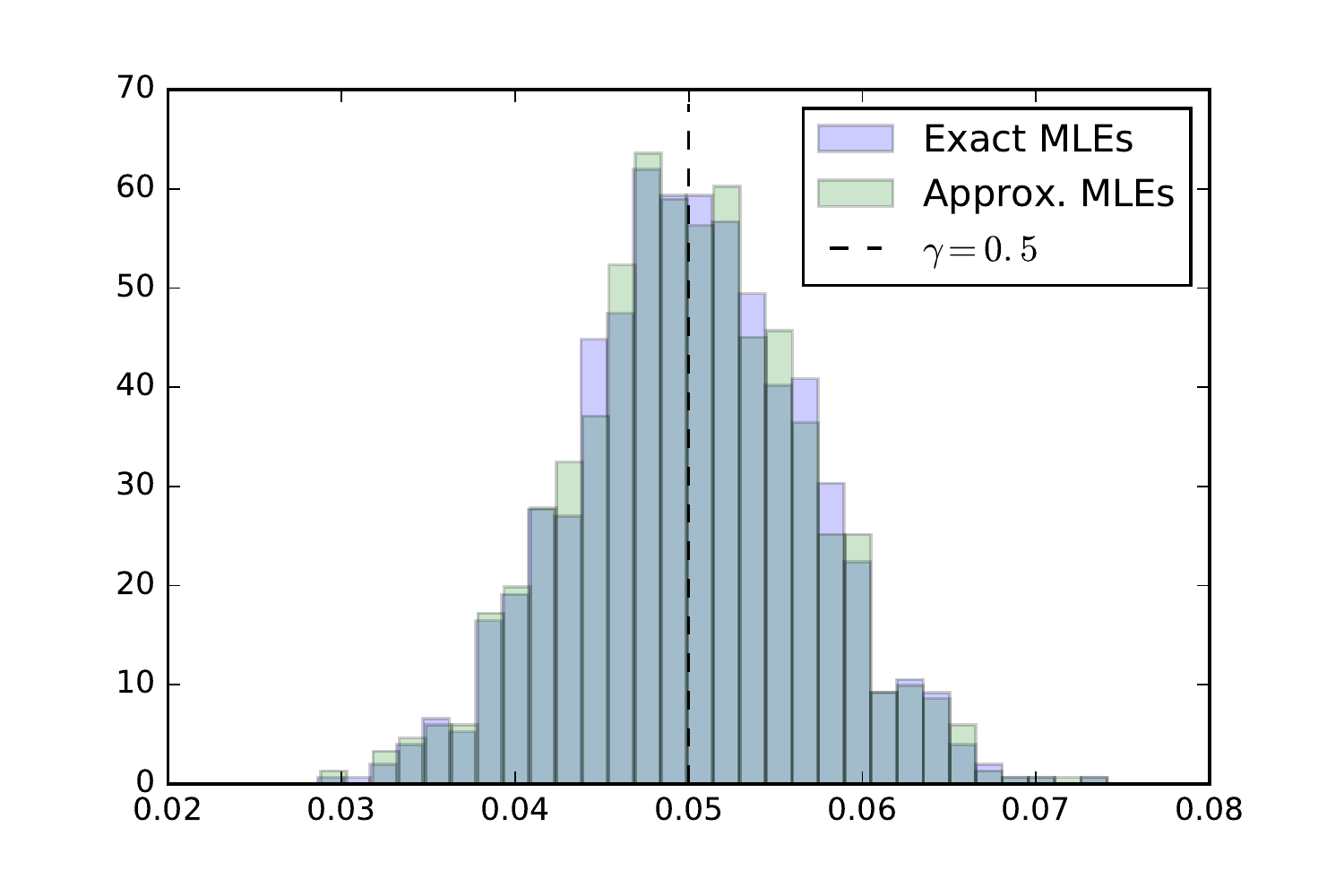}
        \caption{Exact vs. approximated MLEs.}
        \label{fig:2a}
    \end{subfigure}
    ~ 
    \begin{subfigure}[t]{0.475\textwidth}
        \includegraphics[clip, trim=0.5cm 0.5cm 0.5cm 0.5cm,width=\textwidth]{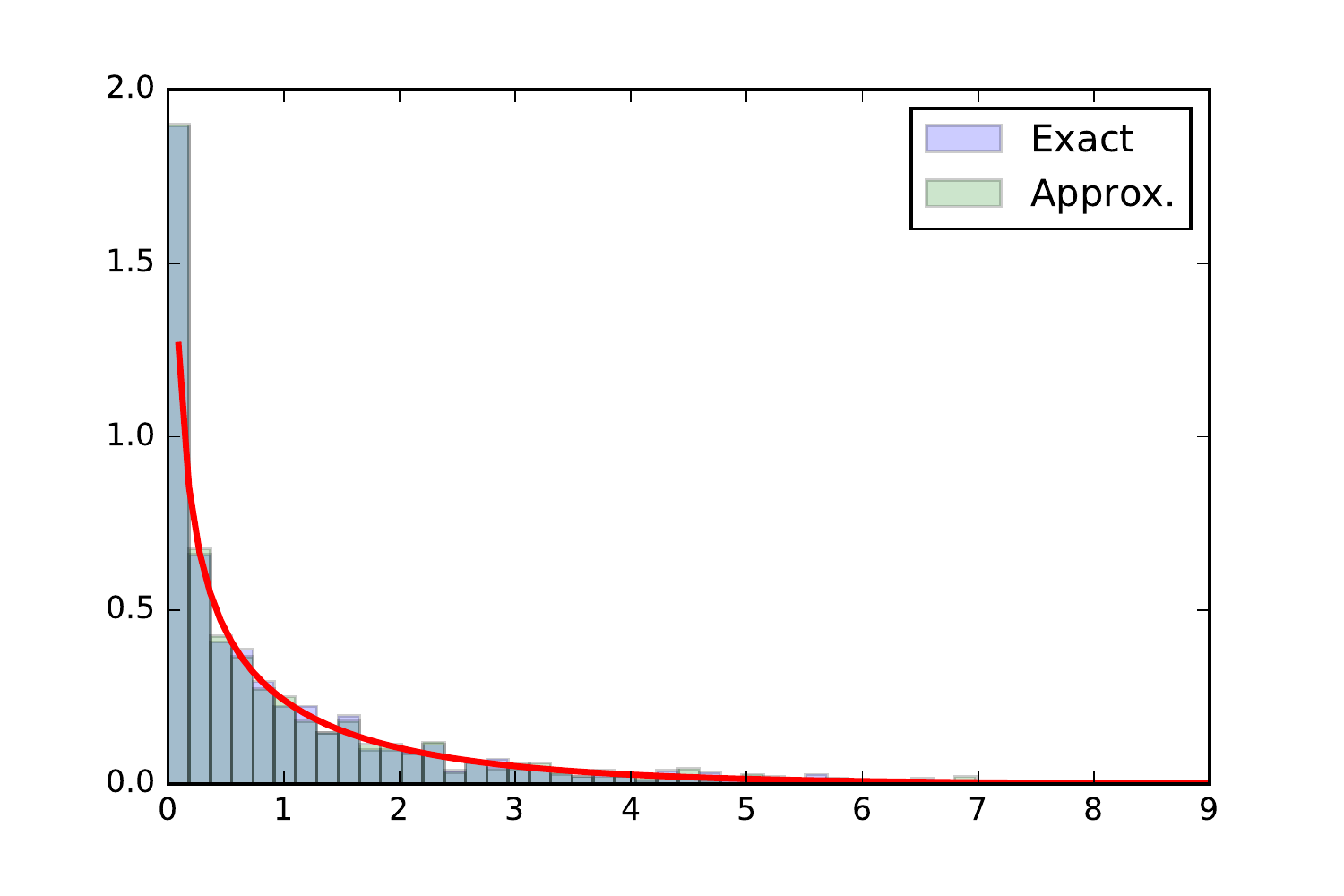}
        \caption{$p(-2 \log \Lambda(\gamma=0.05) \, | \, \gamma=0.05)$}
        \label{fig:2b}
    \end{subfigure}

    \caption{Using approximated likelihood ratios for parameter inference yields an unbiased maximum likelihood estimator
            $\hat \gamma$, as empirically estimated from an ensemble of 1000 artificial datasets.}
    \label{fig:2}
\end{figure}

An advantage of this approach compared to Approximate Bayesian Computation~\citep{beaumont2002approximate} is that the
classifier and calibration -- computationally intensive
parts of the approximation  -- are independent of the dataset $\mathcal{D}$.
Thus once trained and calibrated, the approximation can be applied to any dataset $\mathcal{D}$.
This makes it computationally efficient to perform ensemble tests of the method.

Figure~\ref{fig:2a} shows the empirical distribution of the maximum likelihood estimators (MLEs) from the
approximate likelihood compared to the distribution of the MLEs from the exact likelihood.
It clearly demonstrates that in this case the approximate likelihood yields an unbiased estimator with essentially the same variance as the exact MLE.
In addition to the MLE, we can study the coverage of a confidence interval based on the likelihood ratio test statistic.
This is done by evaluating $-2\log \Lambda(\gamma=0.05)$ for samples drawn from $p(\mathbf{x} | \gamma=0.05)$.
Wilks's theorem states that the distribution of $-2\log \Lambda(\gamma=0.05)$ should follow a $\chi^2_1$ distribution.
%
Figure~\ref{fig:2b} also confirms this behavior, supporting the applicability of this method for likelihood-based inference techniques in the likelihood-free setting.

\subsection{Parameterized inference from multidimensional data}

Let us now consider the more challenging problem of likelihood-free
inference with multidimensional data.
For the sake of the illustration, we will assume 5-dimensional feature $\mathbf{x}$ generated
from the following process $p_0$:
    \begin{enumerate}
        \item $\mathbf{z} := (z_0, z_1, z_2, z_3, z_4)$, such that
            $z_0 \sim {\cal N}(\mu=\alpha, \sigma=1)$,
            $z_1 \sim {\cal N}(\mu=\beta, \sigma=3)$,
            $z_2 \sim {\text{Mixture}}(\sfrac{1}{2}\,{\cal N}(\mu=-2, \sigma=1), \sfrac{1}{2}\,{\cal N}(\mu=2, \sigma=0.5))$,
            $z_3 \sim {\text{Exponential}(\lambda=3)}$, and
            $z_4 \sim {\text{Exponential}(\lambda=0.5)}$;
        \item $\mathbf{x} := R  \mathbf{z}$, where $R$ is a fixed
        semi-positive definite $5 \times 5$ matrix defining a fixed projection
        of $\mathbf{z}$ into the observed space.
    \end{enumerate}

\begin{figure}
    \centering
    \includegraphics[width=1.0\textwidth]{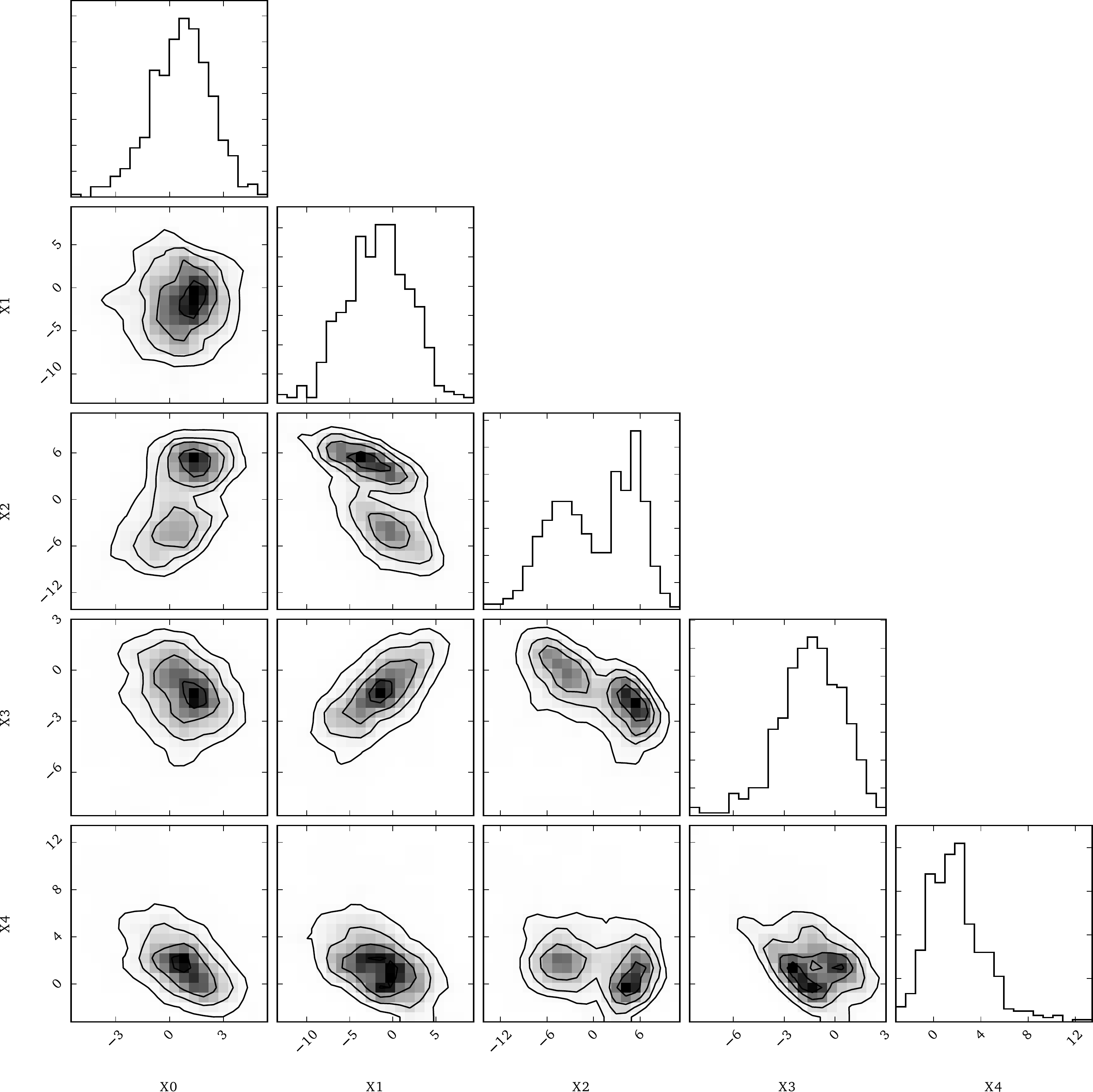}
    \caption{Scatter plots for the 500 samples in the 5-dimensional data ${\cal D}$ generated for nominal values $(\alpha=1, \beta=-1)$. }
    \label{fig:4}
\end{figure}

\begin{figure}
    \centering
    \begin{subfigure}[t]{0.43\textwidth}
        \centering
        \includegraphics[width=\textwidth]{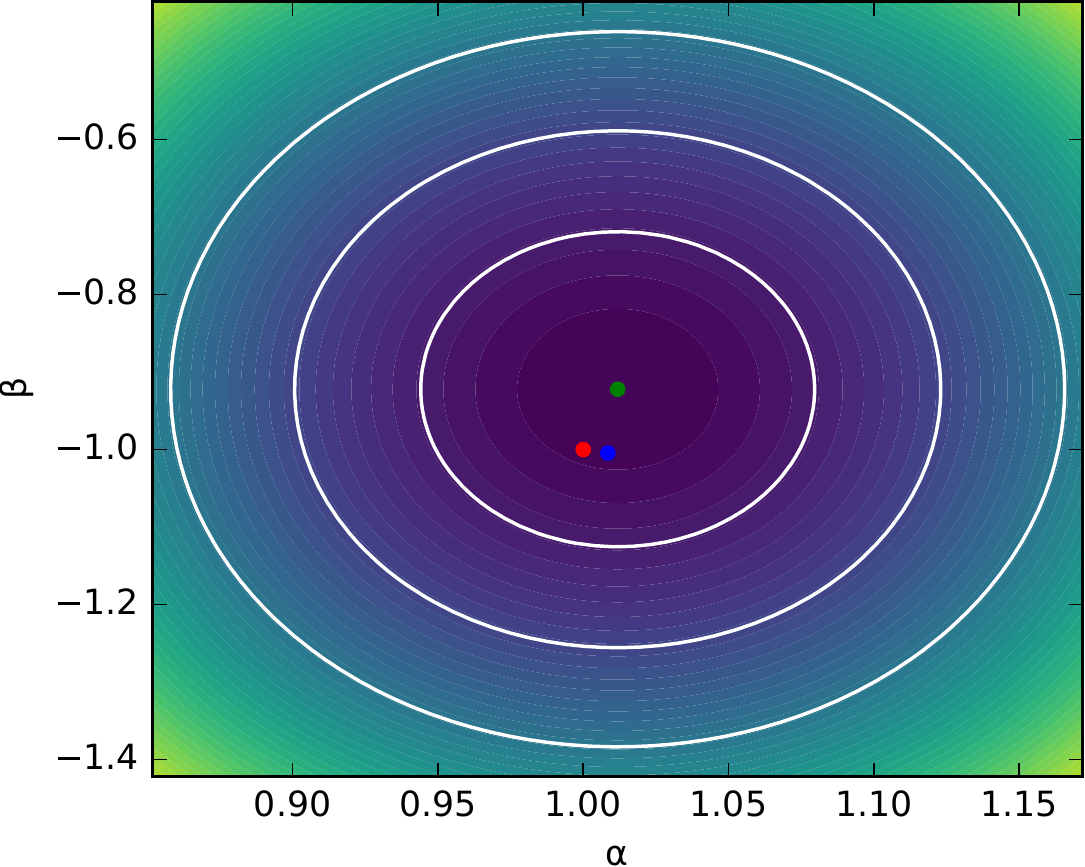}
        \caption{  }
        \label{fig:5a}
    \end{subfigure}
    ~
    \begin{subfigure}[t]{0.43\textwidth}
        \centering
        \includegraphics[width=\textwidth]{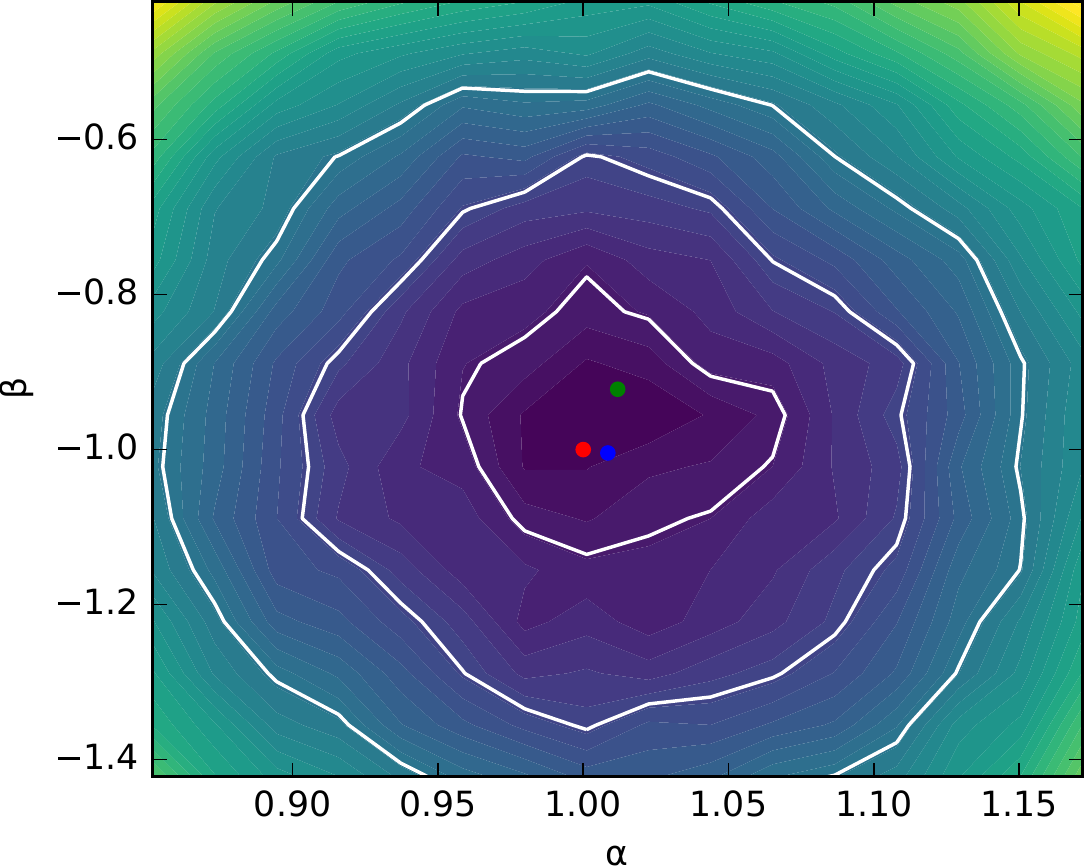}
        \caption{ }
        \label{fig:5b}
    \end{subfigure}

    \begin{subfigure}[t]{0.43\textwidth}
        \centering
        \includegraphics[width=\textwidth]{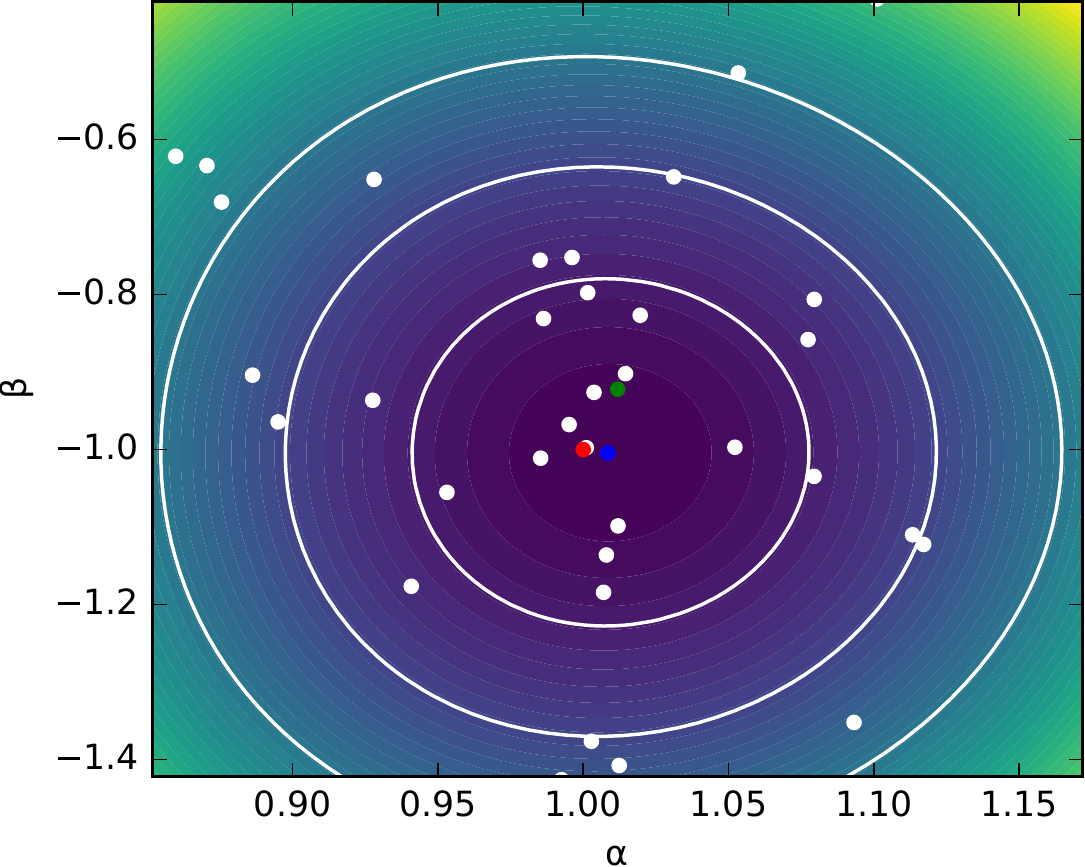}
        \caption{ }
        \label{fig:5c}
    \end{subfigure}
    ~
    \begin{subfigure}[t]{0.43\textwidth}
        \centering
        \hfill\includegraphics[width=0.9\textwidth]{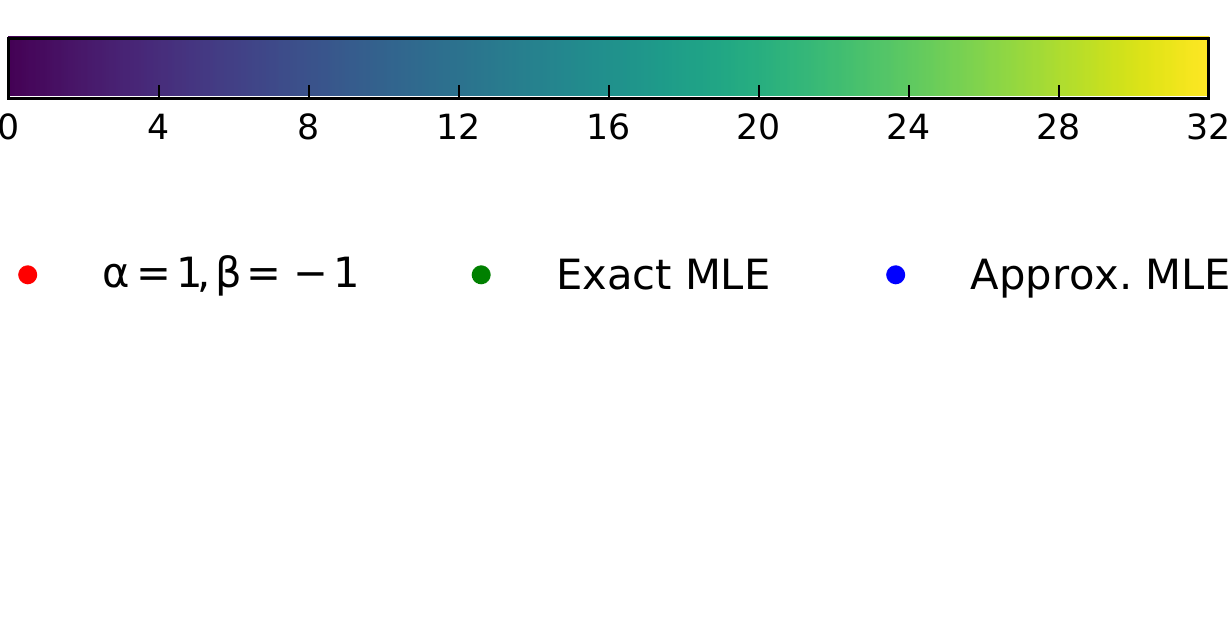}
        \caption{ }
        \label{fig:5d}
    \end{subfigure}
    \caption{Inference from exact and approximate likelihood ratios. The red dot corresponds
    to the true values $(\alpha=1, \beta=-1)$ used to generate $\mathcal{D}$, the green dot is the MLE
     from the exact likelihood, while the blue dot is the MLE  from the approximate likelihood.
    1, 2 and 3-$\sigma$ contours are shown in white.
             (\ref{fig:5a}) The exact $-2\log\Lambda(\alpha,\beta)$ for the observed data ${\cal D}$.
             (\ref{fig:5b}) The approximate $-2\log\Lambda(\alpha,\beta)$ evaluated on a coarse $15 \times 15$  grid.
             (\ref{fig:5c}) A Gaussian Process surrogate of $-2\log\Lambda(\alpha,\beta)$ ratio estimated from a Bayesian optimization procedure. White dots show the  parameter points sampled during the optimization process.}
    \label{fig:5}
\end{figure}

The observations ${\cal D}$ represented in Fig.~\ref{fig:4}
are random samples with $\alpha=1$ and $\beta=-1$.
Our goal is to infer the values $\alpha$ and $\beta$ based on ${\cal D}$.
We construct the log-likelihood ratio
\begin{equation}\label{eqn:obj-ex2}
    -2 \log \Lambda(\alpha, \beta ) = -2 \log \frac{p({\cal D} | \alpha, \beta)}{ \sup_{\alpha,\beta} p({\cal D} | \alpha, \beta) }
\end{equation}
that we calculate via Eqn.~\ref{eq:theta_1_independence}.
Following the procedure described in Sec.~\ref{sec:param-clf}, we
build a single 2-layer neural network (with 5+2 inputs and one output node) to form the parameterized classifier $s(\mathbf{x}; \theta_0, \theta_1)$ and fix $\theta_1=(\alpha=0, \beta=0)$.
Since the generative model is not expensive, the classifier output is calibrated
on-the-fly with histograms for  every candidate parameter pair $(\alpha,\beta)$.

Figure~\ref{fig:5a} shows the exact log-likelihood ratio for this dataset, which has an exact MLE at $(\hat \alpha=1.012, \hat \beta=-0.9221)$.
Figure~\ref{fig:5b} shows the approximate log-likelihood ratio evaluated on a coarse grid of parameter values. Some roughness in the contours is observed, which is primarily due to variance introduced in the calibration procedure. In addition to the statistical fluctuations due to finite calibration samples, there are also fluctuations introduced from changes in the binning of the calibration histograms as $\alpha$ and $\beta$ vary. As discussed in Sec.~\ref{sec:param-calibration}, a parameterized calibration
procedure should ameliorate this issue, but that is left for now as an area for future work.
Nevertheless, optimizing the approximate log-likelihood ratio with
a Bayesian optimization~\citep{brochu2010tutorial,gpy2014} procedure is efficient and effective.
After $50$ likelihood evaluations, the maximum likelihood estimate is found at $(\hat \alpha=1.008, \hat \beta=-1.004)$.
While the objective of the Bayesian optimization procedure is to find the maximum likelihood, the posterior mean of the
internal Gaussian process, shown in Fig.~\ref{fig:5c}, is close to the exact log-likelihood ratio illustrated in Fig.~\ref{fig:5a}.
In each case, the true values $\alpha=1$ and $\beta=-1$ are contained within the $1-\sigma$ likelihood contour.

\begin{figure}
    \centering
    \begin{subfigure}[t]{0.4\textwidth}
        \centering
        \includegraphics[clip, trim=0.3cm 0.3cm 0.3cm 0.3cm,height=10.075em]{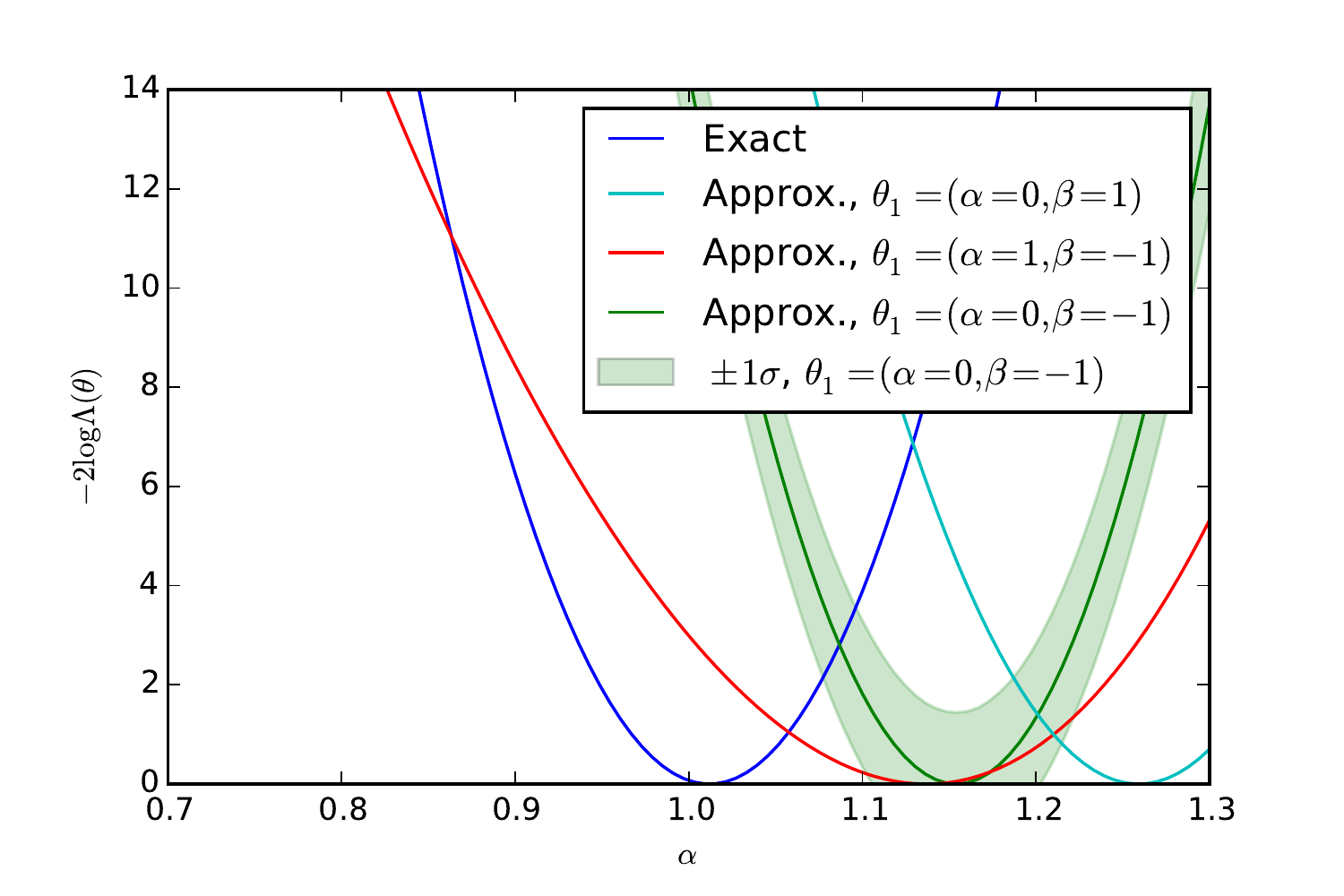}
        \caption{Poorly trained, well calibrated. }
        \label{fig:6a}
    \end{subfigure}
    ~
    \begin{subfigure}[t]{0.4\textwidth}
        \centering
        \includegraphics[clip, trim=0.3cm 0.3cm 0.3cm 0.3cm,height=9.5em]{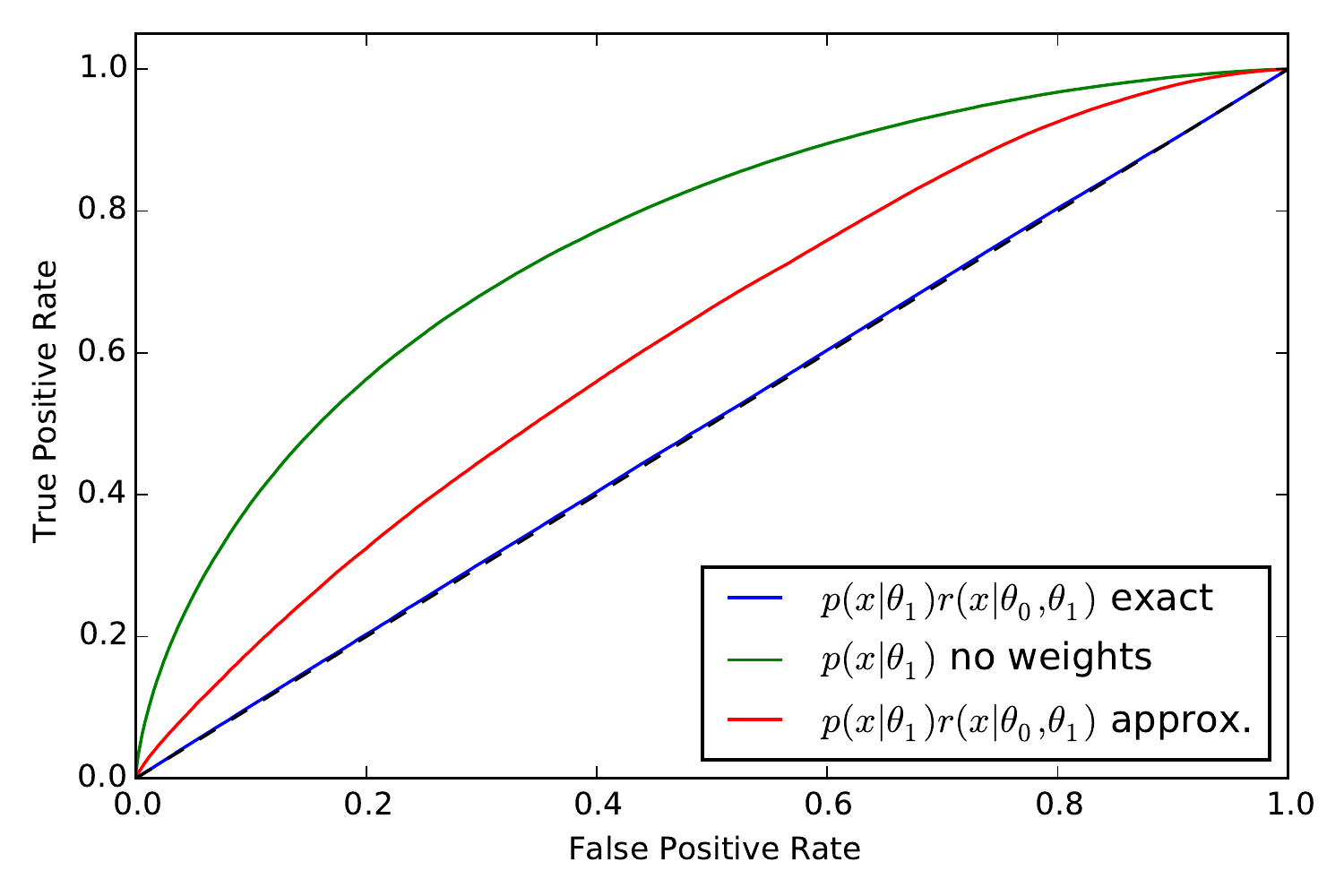}
        \caption{Poorly trained, well calibrated. }
        \label{fig:6b}
    \end{subfigure}

    \begin{subfigure}[t]{0.4\textwidth}
        \centering
        \includegraphics[clip, trim=0.3cm 0.3cm 0.3cm 0.3cm,height=10.075em]{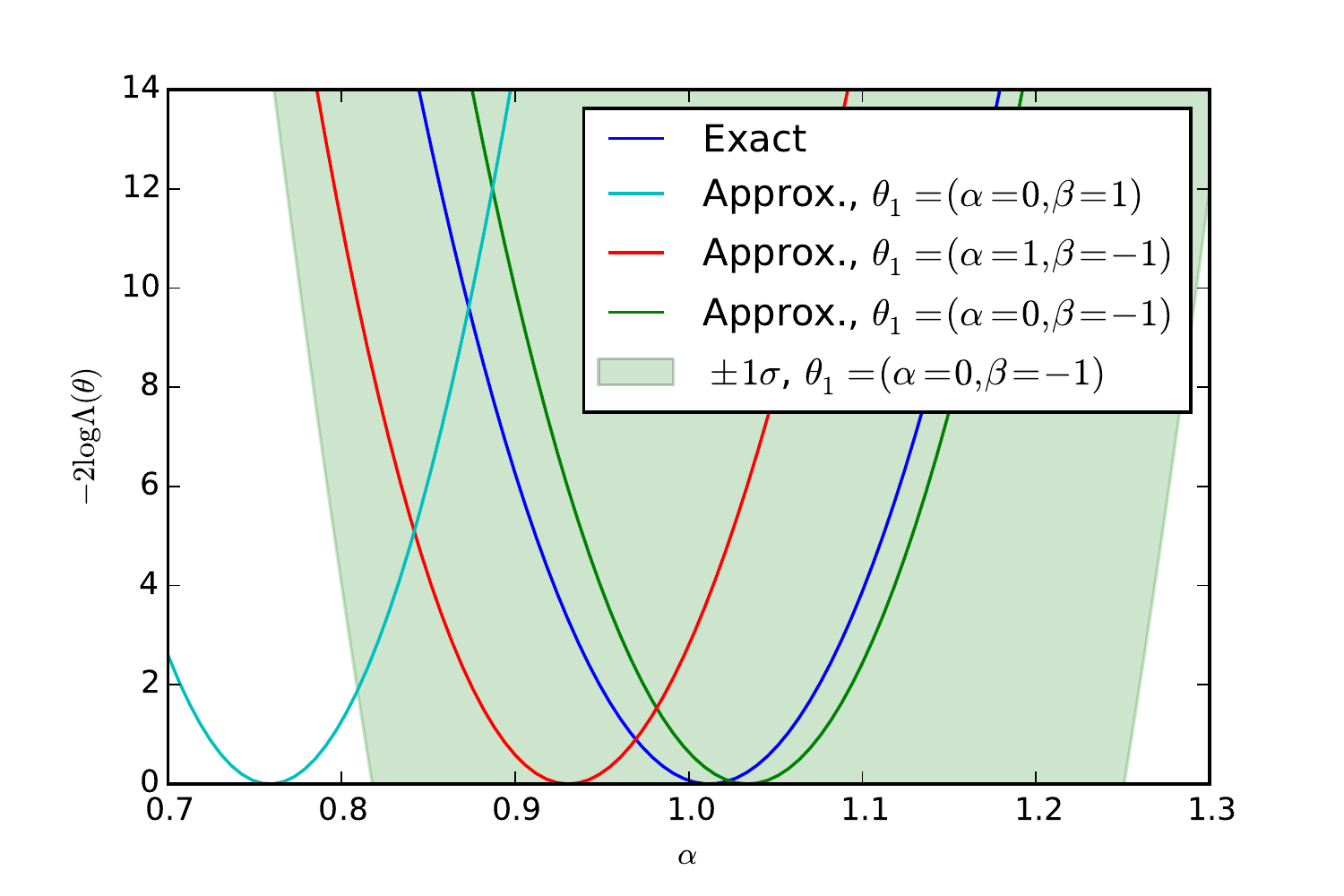}
        \caption{Poorly calibrated, well trained. }
        \label{fig:6c}
    \end{subfigure}
    ~
    \begin{subfigure}[t]{0.4\textwidth}
        \centering
        \includegraphics[clip, trim=0.3cm 0.3cm 0.3cm 0.3cm,height=9.5em]{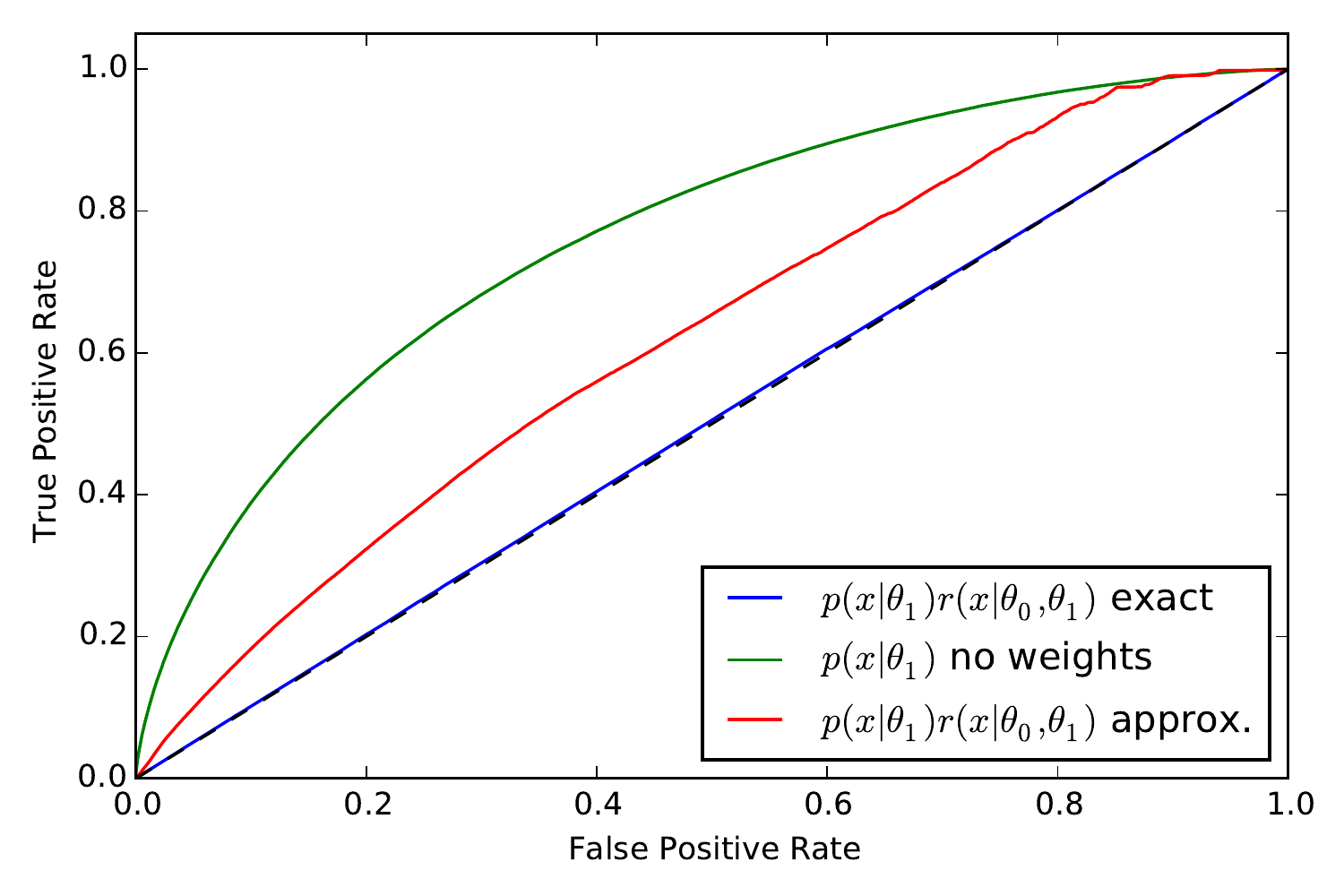}
        \caption{Poorly calibrated, well trained. }
        \label{fig:6d}
    \end{subfigure}

    \begin{subfigure}[t]{0.4\textwidth}
        \centering
        \includegraphics[clip, trim=0.3cm 0.3cm 0.3cm 0.3cm,height=10.075em]{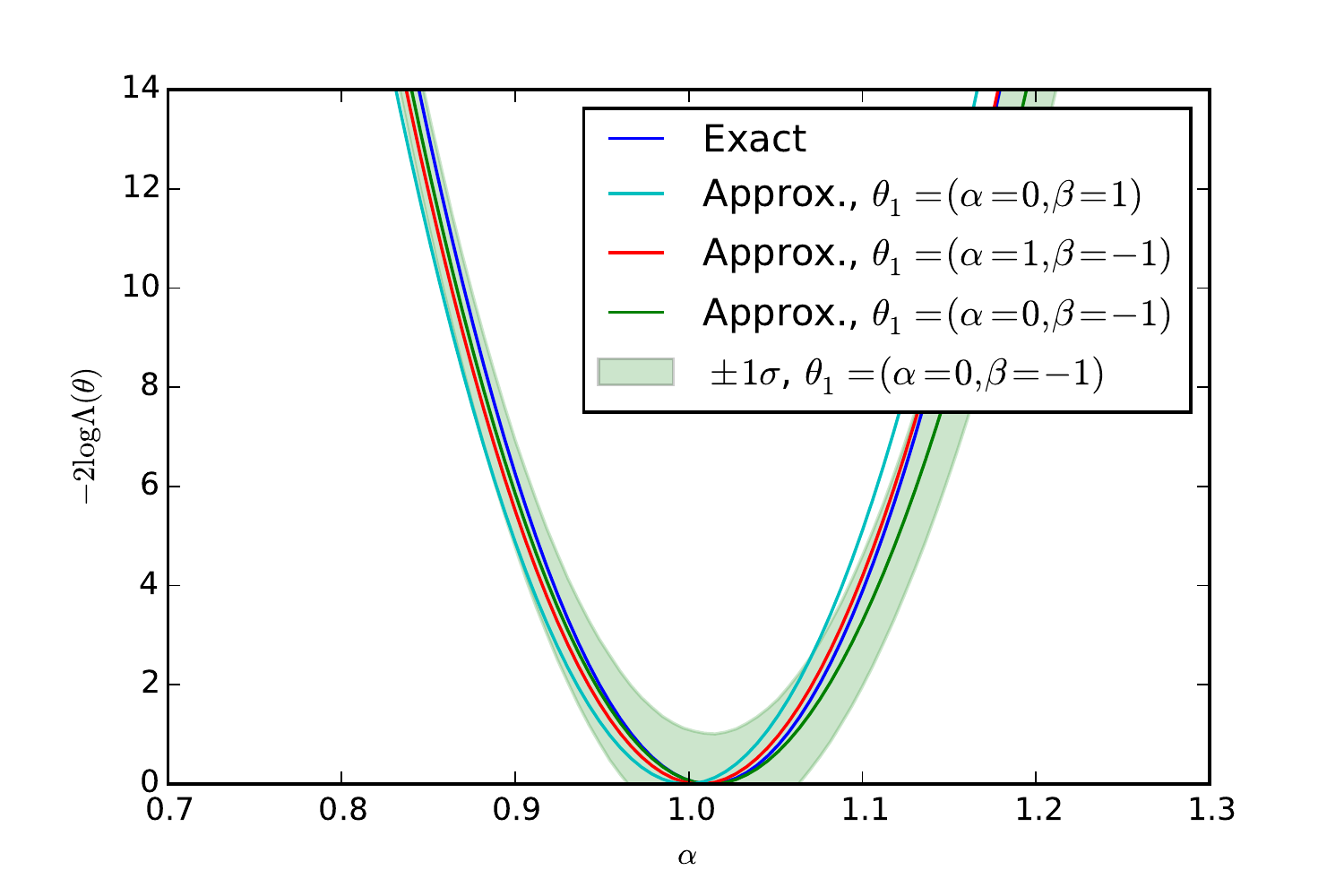}
        \caption{Well trained, well calibrated. }
        \label{fig:6e}
    \end{subfigure}
    ~
    \begin{subfigure}[t]{0.4\textwidth}
        \centering
        \includegraphics[clip, trim=0.3cm 0.3cm 0.3cm 0.3cm,height=9.5em]{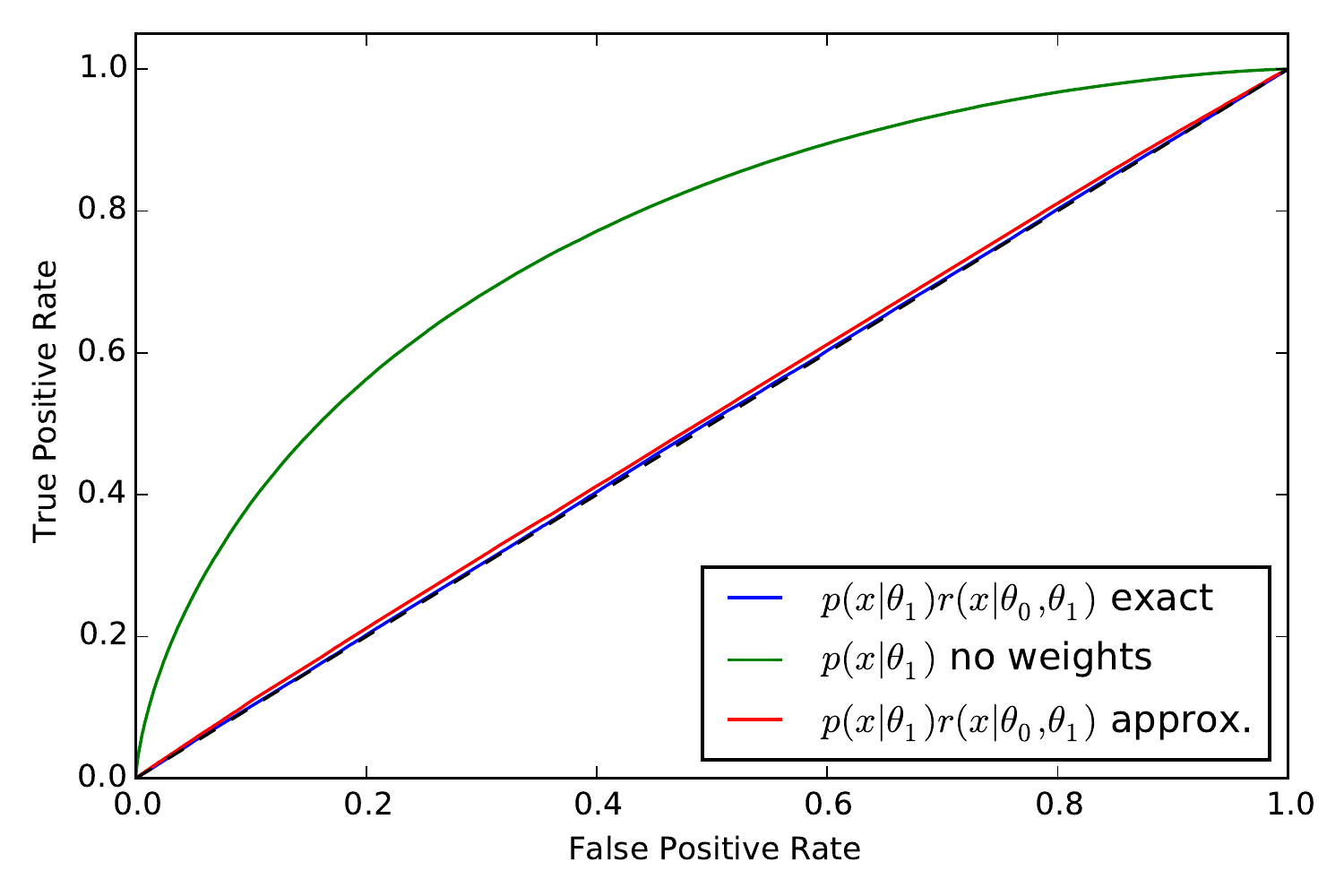}
        \caption{Well trained, well calibrated. }
        \label{fig:6f}
    \end{subfigure}

    \caption{Results from the diagnostics described in Sec.~\ref{S:diagnostics}.
    The rows correspond to the quality of the training and calibration of the classifier. The left plots probe the sensitivity to $\theta_1$, while the right plots show the ROC curve for a calibrator trained to discriminate samples from $p(\mathbf{x}|\theta_0)$ and samples from $p(\mathbf{x}|\theta_1)$ weighted as indicated in the legend. }

    \label{fig:diagnostics}
\end{figure}

Finally, we evaluate the diagnostics described in Sec.~\ref{S:diagnostics} for
this example. To aid in visualization, we restrict to a 1-dimensional slice of
the likelihood along $\alpha$ with $\beta=-1$.   We consider three situations: i)
a poorly trained, but well calibrated classifier; ii) a well trained, but poorly
calibrated classifier; and iii) a well trained, and well calibrated classifier.
For each case, we employ two diagnostic tests. The first checks for independence
of $-2\log\Lambda(\theta)$ with respect to changes in the reference value
$\theta_1$ as shown in Eqn.~\ref{eq:theta_1_independence}. The second uses a
classifier to distinguish between samples from  $p(\mathbf{x}|\theta_0)$ and
samples from $p(\mathbf{x}|\theta_1)$ weighted according to $r(\mathbf{x};
\theta_0, \theta_1)$.  As discussed for Fig.~\ref{fig:5b}, statistical
fluctuations in the calibration lead to some noise in the raw approximate
likelihood. Thus, we show the posterior mean of a Gaussian processes resulting
from Bayesian optimization of  the raw approximate likelihood as in
Fig.~\ref{fig:6c}. In addition, the standard deviation of the Gaussian process
is shown for one of the $\theta_1$ reference points to indicate the size of
these statistical fluctuations. It is clear that in the well calibrated cases
that these fluctuations are small, while in the poorly calibrated case these
fluctuations are large. Moreover, in Fig.~\ref{fig:6a} we see that in the poorly
trained, well calibrated case the classifier $\hat{s}(\mathbf{x}; \theta_0,
\theta_1)$ has a significant dependence on the $\theta_1$ reference point. In
contrast, in Fig.~\ref{fig:6c} the likelihood curves vary significantly, but
this is comparable to the fluctuations expected from the calibration procedure.
Finally, Fig.~\ref{fig:6e} shows that in the well trained, well calibrated case
that the likelihood curves are all consistent with the exact likelihood within
the estimated uncertainty band of the Gaussian process.  The ROC curves tell a
similarly revealing story. As expected, the classifier is not able to
distinguish between the distributions when $p(\mathbf{x}|\theta_1)$ is weighted
by the exact likelihood ratio. We can also rule out that this is a deficiency in
the classifier because the two distributions are well separated when no weights
are applied to $p(\mathbf{x}|\theta_1)$. In both Fig.~\ref{fig:6b} and
Fig.~\ref{fig:6d} the ROC curve correctly diagnoses deficiencies in the
approximate likelihood ratio  $\hat{r}(\hat{s}(\mathbf{x}; \theta_0,
\theta_1))$. Finally, Fig.~\ref{fig:6f} shows that the ROC curve in the well
trained, well calibrated case is almost identical with the exact likelihood
ratio, confirming the quality of the approximation.

Overall, this example further illustrates and confirms the ability
of the proposed method for inference with multiple parameters and multi-dimensional data where
reliable approximations $\hat p(\mathbf{x}|\theta_0)$ and $\hat p(\mathbf{x}|\theta_1)$
are often difficult to construct.

\subsection{High energy physics}

High energy physics was the original scientific domain that motivated the development of this procedure.
In high energy physics, we are often searching for some class of events,
generically referred to as \textit{signal}, in the presence of a separate class
of \textit{background} events.  For each event we measure some quantities
$\mathbf{x}$, with corresponding distributions $p_s(\mathbf{x}|\nu)$ for signal
and $p_b(\mathbf{x}|\nu)$ for background,  where $\nu$ are nuisance
parameters describing uncertainties in the underlying physics prediction or
response of the measurement device. The total model is a mixture of the signal
and background, and $\mu$ is the mixture coefficient associated to the signal
component, that is
\begin{equation}\label{eq:hepGen}
p( {\cal D} | \mu, \nu) = \prod_{\mathbf{x} \in {\cal D}} \left[ \mu p_s( \mathbf{x} |  \nu)  + (1-\mu) p_b( \mathbf{x} | \nu) \right] \;.
\end{equation}
Accordingly, new particle searches are typically framed as
hypothesis tests where the null corresponds to $\mu=0$, and the generalized
likelihood ratio is used as a test statistic.

Nuisance parameters are an after thought in the typical usage of machine
learning in high energy physics. The classifiers are typically trained with
data generated using a fixed nominal value of the nuisance parameters $\nu=\nu_0$.
However, as experimentalists we know that we must account for
the systematic uncertainties that correspond to the nuisance parameters
$\nu$. Thus, typically we take the classifier $\hat s(\mathbf{x})$ as fixed and then propagate
uncertainty by estimating $\hat{p}_s(\hat s(\mathbf{x}) | \nu)$ with a parameterized calibration procedure.
However, this classifier is clearly not optimal
for $\nu \ne \nu_0$. In contrast, a parameterized classifier proposed in this work
would yield more accurate estimates of the generalized likelihood ratio.

%
%

In addition to robustness to systematic uncertainties incorporated by the nuisance parameters $\nu$,
the proposed method can be used to infer parameters of interest. Not only can the mixture coefficient $\mu$
be inferred using the decomposition procedure, but also physical parameters like particle masses that change the distribution of $\mathbf{x}$.
This formalism represents a significant step forward in the usage of machine
learning in high energy physics, where classifiers have always been used between two static
classes of events and not parameterized explicitly in terms of the physical
quantities we wish to measure.



Another approach for parameter inference with multi-dimensional data specific to
high energy physics is the so-called matrix element
method, in which one  directly computes an approximate likelihood ratio by
performing a computationally intensive integral associated to a simplified detector
response~\citep{Volobouev:2011vb}. In the approach considered in this paper, the
detailed detector response is naturally incorporated by the simulator; however, that integral
is intractable for the matrix element method. Even with drastic simplifications of the detector response, the
matrix element method can take several minutes of CPU time to calculate the
likelihood ratio for a single event $\mathbf{x}$. The work here can be seen as aiming
at the same conceptual target, but relying on machine learning to overcome the
complexity of the detector simulation. It also offers enormous speed increase
for evaluating the likelihood at the cost of an initial training stage. In
practice, the matrix element method has only been used for searches and
measurement of a single physical parameter (sometimes with a single nuisance
parameter as in~\citep{Aaltonen:2010yz}).

Contemporary examples where the technique presented here could have major impact
include the measurement of coefficients to quantum mechanical operators
describing the production and decay of the Higgs boson~\citep{Chen:2014pia} and, if we are so
lucky, measurement of the mass of supersymmetric particles in cascade
decays~\citep{Allanach:2000kt}.  Both of these examples involve data sets with
many events, each with a feature vector $\mathbf{x}$ that has on the order of 10
components, and a parameter vector $\theta$ with 2-10 parameters of interest and
possibly many more nuisance parameters.


\section{Related work}
\label{sec:related}

The closest work to the proposed method is due to \cite{Neal:2007zz}, who
similarly considers the problem of approximating the likelihood function when
only a generative model is available. That work sketches a scheme in which one
uses a classifier with both $\mathbf{x}$ and $\theta$ as an input to serve as a
dimensionality reduction map. The key distinction comes in the handling of
$\theta$.  Neal argues that a classifier cannot be used on real data, since we
do not know the correct value for $\theta$, and goes on to outline an approach
where one uses regression on a per-event basis to estimate
$\hat{\theta}(\mathbf{x})$ and perform the composition $s(\mathbf{x};
\hat{\theta}(\mathbf{x}))$. As pointed out by the author, this can lead to a
significant loss of information since a single observation $\mathbf{x}$ may
carry little information about the true value of $\theta$, though a full data
set ${\cal D}$ may be informative.
The work of \cite{Neal:2007zz} correctly identifies this as
an approximation of the target likelihood even in the case of a ideal
classifier. In contrast, the approach described here does not eliminate the
dependence of the classifier on $\theta$.
Instead, we embed a parameterized classifier
into the likelihood and postpone the evaluation of the classifier to the point
of evaluation of the likelihood when $\theta$ is explicitly being tested. This
avoids the loss of information that occurs from the regression step
$\hat{\theta}(\mathbf{x})$ proposed by \cite{Neal:2007zz} and leads to
Thm.~\ref{thm:ratio-equivalence}, which is an exact result in the case of an
ideal classifier. In both cases, the quality of the classifier is factorized
from the calibration of its density, which allows for valid inference even if
there is a loss of power due to a non ideal classifier.

Also close to our work, \cite{ClaytonScott} and \cite{JMLR:v14:tong13a} consider
the machine learning problem associated to Neyman-Pearson hypothesis testing. In
a similar setup, they consider the situation where one does not have access to
the underlying distributions, but only has i.i.d. samples from each hypothesis.
This work generalizes that goal from the Neyman-Pearson setting to generalized
likelihood ratio tests and emphasizes the connection with classification.
\cite{Ihler2004} take on a different problem
(tests of statistical independence) by using machine learning algorithms to find
scalar maps from the high-dimensional feature space that achieve the desired
statistical goal when the fundamental high-dimensional test is intractable.


More generally, likelihood ratio testing directly relates to the density ratio
estimation problem, which consists in estimating the ratio of two
densities from finite collections of observations ${\cal D}_0$ and
${\cal D}_1$. Density ratio estimation is connected to many machine learning
fundamental problems, including transfer learning~\citep{sugiyama2012machine},
probabilistic classification and regression~\citep{vapnik1998statistical},
outlier detection~\citep{hido2011statistical}, and many others. For learning
under covariate shift, \cite{shimodaira2000improving} and \cite{sugiyama2005input} estimate
the density ratio $r(\mathbf{x};\theta_0,\theta_1)$ from straightforward
approximations $\hat{p}(\mathbf{x}|\theta_0)$ and $\hat{p}(\mathbf{x}|\theta_1)$
separately obtained using kernel density estimation. Despite its theoretical
consistency, this approach is known to be ineffective in
practice~\citep{sugiyama2007covariate,bickel2009discriminative}, since
it relies on modeling numerator and denominator high-dimensional densities,
which is a harder problem than modeling their ratio only.
While the proposed method also proceeds in two similar steps, estimating
$p(s(\mathbf{x}))$ is much easier than estimating $p(\mathbf{x})$,
since $s$ projects $\mathbf{x}$ into a one-dimensional space in which only
the informative content of $r(\mathbf{x})$ is preserved.
Finally, in contrast with the
proposed method which decouples reduction from calibration, other
approaches proposed within the literature (see
\cite{sugiyama2012density,gretton2009covariate,nguyen2010estimating,vapnik2013constructive}
and references therein) provide solutions for estimating
$r(\mathbf{x};\theta_0,\theta_1)$ directly from $\mathbf{x}$, in one step. Under some
assumptions, the convergence of the obtained estimates is also proven for some
of these approaches.


\section{Conclusions}
\label{sec:conclusions}

In this work, we have outlined an approach to reformulate generalized likelihood
ratio tests with a high-dimensional data set in terms of univariate densities
of a classifier score. We have shown that a parameterized family of
discriminative classifiers $\hat s(\mathbf{x}; \theta_0, \theta_1)$ trained and
calibrated with a simulator can be used to approximate the likelihood ratio,
even when it is not possible to directly evaluate the likelihood
$p(\mathbf{x}|\theta)$.
The proposed method offers an alternative to Approximate Bayesian Computation
for parameter inference in the likelihood-free setting that can also be used in
the frequentist formalism without specifying a prior over the parameters. 
In contrast to approaches that learn the posterior conditional on $\mathcal{D}$, our
approach can be applied to any observed data $\mathcal{D}$ once trained.
A strength of this approach is that it separates the quality of the approximation
of the target likelihood from the quality of the calibration. The former
leverages the continuing advances in supervised learning approaches to classification. 
The calibration procedure for a particular
parameter point is fairly straightforward since it involves estimating a
univariate density using a generative model of the data. The difficulty of the
calibration stage is performing this calibration continuously in $\theta$.
Different strategies to this calibration are anticipated depending on the
dimensionality of $\theta$, the complexity of the resulting likelihood function,
or the practical issues associated to running the simulator.

\section*{Acknowledgments} KC and GL are both supported through NSF ACI-1450310, 
additionally KC is supported through PHY-1505463 and PHY-1205376.
JP was partially supported by the Scientific and Technological Center of Valpara\'iso (CCTVal) under Fondecyt grant BASAL FB0821.
KC would like to thank Daniel Whiteson for
encouragement and Alex Ihler for challenging discussions
that  led to a reformulation of the initial idea. KC would also like to thank
Shimon Whiteson, Babak Shahbaba for advice in the presentation, Radford Neal for
discussion of his earlier work, Yann LeCun, Philip Stark, and Pierre Baldi for
their feedback on the project early in its conception, Bal\'azs K\'egl for
feedback to the draft,  and Yuri
Shirman for reassuring cross checks of the Theorem.   KC is grateful
to UC-Irvine for their hospitality while this research was carried out and the
Moore and Sloan foundations for their generous support of the data science
environment at NYU.

\bibliographystyle{apalike}
\bibliography{learning.bib}

\appendix

\section{Probabilistic classification for building $s$}
\label{app:clf-for-s}

In this appendix, we show for completeness that the probabilistic classification framework
yields a reduction $s$ which satisfies conditions of Thm.~\ref{thm:ratio-equivalence}.

\begin{proposition} \label{thm:best-classifier}
Let $\mathbf{X} = (X_1, ..., X_p)$ and $Y$ be random input and output variables
with values in ${\cal X} \subseteq \mathbb{R}^p$
and ${\cal Y} = \{0, 1\}$ and mixed joint probability density  function
$p_{\mathbf{X},Y}(\mathbf{x}, y)$. For the squared error loss, the best
regression function $s : {\cal X} \mapsto [0, 1]$, or equivalently the best
probabilistic classifier, is
\begin{equation}
s^*(\mathbf{x}) = \frac{P(Y=1) p_{\mathbf{X}|Y}(\mathbf{x}|Y=1)}{P(Y=0) p_{\mathbf{X}|Y}(\mathbf{x} | Y=0) + P(Y=1) p_{\mathbf{X}|Y}(\mathbf{x} | Y=1)}.
\end{equation}
\end{proposition}

\begin{proof}
For the squared error loss,
\begin{align}
s^*(\mathbf{x}) &= \argmin_{s(\mathbf{x})} \mathbb{E}_{Y|\mathbf{X}=\mathbf{x}} \{ (Y - s(\mathbf{x}))^2 \} \nonumber \\
&=  \argmin_{s(\mathbf{x})} \mathbb{E}_{Y|\mathbf{X}=\mathbf{x}} \{ Y^2 \} - 2s(\mathbf{x}) \mathbb{E}_{Y|\mathbf{X}=\mathbf{x}} \{ Y \} + s(\mathbf{x})^2 \nonumber \\
&=  \argmin_{s(\mathbf{x})} -2s(\mathbf{x}) \mathbb{E}_{Y|\mathbf{X}=\mathbf{x}} \{ Y \} + s(\mathbf{x})^2
\end{align}
The last expression is minimized when $\frac{d}{ds(\mathbf{x})} (-2s(\mathbf{x}) \mathbb{E}_{Y|\mathbf{X}=\mathbf{x}} \{ Y \} + s(\mathbf{x})^2) = 0$,
that is when $-2 \mathbb{E}_{Y|\mathbf{X}=\mathbf{x}} \{ Y \} + 2 s(\mathbf{x}) = 0$, hence
\begin{equation}
s^*(\mathbf{x}) = \mathbb{E}_{Y|\mathbf{X}=\mathbf{x}} \{ Y \}.
\end{equation}
For ${\cal Y} = \{ 0, 1 \}$,
\begin{align}
\mathbb{E}_{Y|\mathbf{X}=\mathbf{x}} \{ Y \} &= P(Y=0|\mathbf{X}=\mathbf{x}) \times 0 +  P(Y=1|\mathbf{X}=\mathbf{x}) \times 1 \nonumber \\
&= \frac{P(Y=1) p_{\mathbf{X}|Y}(\mathbf{x}|Y=1)}{p_{\mathbf{X}}(\mathbf{x})} \nonumber \\
&= \frac{P(Y=1) p_{\mathbf{X}|Y}(\mathbf{x}|Y=1)}{P(Y=0) p_{\mathbf{X}|Y}(\mathbf{x} | Y=0) + P(Y=1) p_{\mathbf{X}|Y}(\mathbf{x} | Y=1)}.
\end{align}
\end{proof}

For $P(Y=0)=P(Y=1)=\sfrac{1}{2}$, the best regression function $s^*$ simplifies
to
\begin{equation}
s^*(\mathbf{x}) = \frac{p_{\mathbf{X}|Y}(\mathbf{x}|Y=1)}{p_{\mathbf{X}|Y}(\mathbf{x} | Y=0) + p_{\mathbf{X}|Y}(\mathbf{x} | Y=1)}.
\end{equation}
If we further assume that samples for $Y=0$ (resp. $Y=1$) are drawn from some parameterized
distribution with probability density $p_{\mathbf{X}}(\mathbf{x}|\theta_0)$ (resp. $p_{\mathbf{X}}(\mathbf{x}|\theta_1)$), then the best regression function can be rewritten
as
\begin{equation}
s^*(\mathbf{x}) = \frac{p_{\mathbf{X}}(\mathbf{x}|\theta_1)}{p_{\mathbf{X}}(\mathbf{x} | \theta_0) + p_{\mathbf{X}}(\mathbf{x} | \theta_1)}.
\end{equation}
In particular, this regression function satisfies conditions of
Thm.~\ref{thm:ratio-equivalence} since $s^*(\mathbf{x}) =
m(r(\mathbf{x}; \theta_0, \theta_1))$,
for $m(r(\mathbf{x})) = ({1 + r(\mathbf{x})})^{-1}$, is monotonic with
$r(\mathbf{x}; \theta_0, \theta_1)$.

Proposition~\ref{thm:best-classifier} holds for the squared error loss, but it
can be similarly shown that classifiers minimizing the exponential loss, the
binomial log-likelihood (or cross-entropy) or the squared hinge loss are also
monotonic with the density ratio~\citep{friedman2000additive,lin2002support}.
However, a classifier with discrete outputs and minimizing the zero-one loss
does not satisfy conditions of the theorem.

\end{document}